\documentclass{article}

\usepackage{graphicx}%
\usepackage{multirow}%
\usepackage{amsmath,amssymb,amsfonts}%
\usepackage{amsthm}%
\usepackage{mathrsfs}%
\usepackage{xcolor}%
\usepackage{textcomp}%
\usepackage{booktabs}%
\usepackage{algorithm}%
\usepackage{algorithmicx}%
\usepackage{algpseudocode}%
\usepackage{listings}%
\usepackage{multirow}
\usepackage{authblk}
\usepackage{natbib}
\usepackage[hidelinks]{hyperref}
\usepackage{appendix}
\usepackage{geometry}

\newtheorem{theorem}{Theorem}
\newtheorem{assumption}{Assumption}%
\newtheorem{proposition}{Proposition}

\newtheorem{definition}{Definition}%

\newcommand{\keywords}[1]{%
\par\noindent\textbf{Keywords: } #1
}

\begin{document}

\title{Post-Selection Inference for Multiverse Analysis in Mixed-Effects Models (PIMAX)}

\author[1]{Anna Vesely\thanks{anna.vesely2@unibo.it}}
\author[2]{Angela Andreella\thanks{angela.andreella@unive.it}}

\affil[1]{Department of Statistical Sciences, University of Bologna, Via delle Belle Arti 41, 40126 Bologna, Italy}
\affil[2]{Department of Economics, Ca' Foscari University of Venice, Cannaregio 873, 30100 Venezia, Italy}

\date{}


\maketitle

\abstract{Sign-flipping score tests provide robust inference in generalized linear models under variance misspecification and form the basis of two recent inferential frameworks: post-selection inference in multiverse analysis (PIMA) and the sign-flipping score-based two-stage summary-statistics approach (flip2sss). PIMA provides asymptotically valid inference across a multiverse of model specifications, whereas flip2sss extends sign-flipping score testing to longitudinal and hierarchical data through cluster-level summary statistics. In this paper, we combine these two approaches to develop PIMAX, a multiverse inferential framework for clustered observations. The resulting method extends post-selection inference to clustered-data settings, accommodating heteroscedasticity, unbalanced designs, and within-cluster dependence. Given a multiverse of candidate specifications, PIMAX provides a global $p$-value for testing whether any specification exhibits a non-zero effect (weak control of the family-wise error rate, FWER), lower confidence bounds on the number of true discoveries, and multiplicity-adjusted $p$-values for identifying the specific contributing specifications (strong FWER control). By avoiding inference based on a fully specified random-effects covariance structure, PIMAX solves a key source of type I error inflation due to random-effects misspecification while enabling inference across a multiverse of fixed-effects specifications.}

\keywords{clustered data; generalized linear mixed models; multiverse analysis; selective inference; sign-flipping score test; two-stage summary statistics}



\section{Introduction}

High-dimensional and structured data settings often involve a large number of defensible analytical choices. Different selections of covariates, transformations, adjustment sets, or modeling assumptions may all be scientifically justified, yet lead to different substantive conclusions. Multiverse analysis \citep{steegen2016increasing} addresses this issue by systematically evaluating results across a collection of plausible specifications, thereby making explicit the role of researcher degrees of freedom in the analysis. This perspective is especially relevant in light of longstanding concerns about selective reporting and undisclosed analytical flexibility \citep{simmons2011false,gelman2014garden,nosek2014scientific}. However, transparency alone does not resolve the inferential problem induced by model multiplicity. Once multiple specifications have been considered, formal inference must account for the fact that conclusions are drawn after examining a set of reasonable alternatives. Multiverse analysis therefore naturally gives rise to a post-selection inference problem \citep{benjamini2020selective}, in which multiplicity must be controlled over the full collection of candidate models.

Post-selection inference in multiverse analysis (PIMA), proposed by \citet{girardi2024post}, addresses this problem in the framework of generalized linear models (GLMs). Building on sign-flipping score tests \citep{hemerik2020robust,de2025inference}, PIMA provides asymptotically valid inference across a multiverse of model specifications, allowing one to test whether the effect of interest is present in at least one plausible specification and to perform model-specific inference with appropriate multiplicity adjustment. Operationally, PIMA implements a resampling-based test within each specification using sign-flip transformations and combines the resulting standardized statistics across the multiverse via a suitable combining function. With proper handling of the transformations, this construction preserves dependence among model-specific statistics and yields a global test of the intersection null hypothesis that the effect is absent in all candidate specifications. Through closed testing theory \citep{marcus1976closed,goeman2011multiple} and its associated shortcut methods \citep{westfall1993resampling,vesely2023permutation}, PIMA provides multiplicity-adjusted $p$-values with strong family-wise error rate (FWER) control and simultaneous lower confidence bounds on the number of false null hypotheses, offering post-selection identification and quantification of the specifications that support a non-zero effect. PIMA thus represents a formal inferential counterpart to multiverse analysis. An alternative approach is specification curve analysis \citep{simonsohn2020specification}. However, the method only provides a global test for the presence of at least one non-zero effect across specifications, with weak FWER control. Furthermore, its theoretical justification is developed in the context of linear models and does not readily extend to GLMs \citep{girardi2024post}. For these reasons, we focus our attention on PIMA.

A major limitation of the original PIMA framework is that it assumes independent observations. In many applications, however, data are clustered, longitudinal, or multilevel, so that valid inference must account for within-cluster dependence. In these settings, model uncertainty is inherently more complex. Researchers are typically uncertain not only about the fixed-effects component of the model, but also about the form of the dependence structure. Within the generalized linear mixed model (GLMM) framework, inferential conclusions may depend on whether random slopes are included, which covariates are allowed to vary across clusters, and how the covariance structure of the random effects is specified \citep{breslow1993approximate,mcculloch2008generalized}. These choices are rarely unique in practice, and different defensible specifications may lead to different conclusions, as reflected in the methodological debate on maximal versus parsimonious random-effects structures \citep{barr2013random,bates2015parsimonious,matuschek2017balancing}. Moreover, likelihood-based inference in GLMMs may be sensitive to misspecification of the random-effects structure, while estimation itself may become unstable in complex models, sparse binary settings, or small samples, where convergence failures, singular fits, and boundary estimates are common \citep{breslow1993approximate,cnaan1997tutorial}. Both cases lead to inflated type I error, i.e., invalid inference on fixed effects. Thus, in the presence of dependent observations, a multiverse is shaped not only by alternative mean structures but also by competing assumptions on the underlying dependence mechanism. Any inferential procedure for this setting should therefore handle clustering and multiplicity simultaneously, while avoiding undue reliance on a single fully specified random-effects model.

A partial solution is provided by the sign-flipping score-based two-stage summary-statistics approach (flip2sss) of \cite{andreella2025robust}, which extends resampling-based score testing to dependent data. The procedure follows a two-stage strategy: cluster-specific summary measures are first obtained by fitting separate GLMs within clusters and are then analyzed jointly in a second-stage score test based on sign flipping. This construction reduces the dependence problem to the cluster level and avoids explicit parametric specification of the random-effects distribution or covariance structure, while accommodating heteroscedasticity and unbalanced designs. However, flip2sss is not a post-selection method: it yields valid inference within a given specification, but does not address the multiplicity induced by a multiverse of plausible models.

In this paper, we combine these two lines of work to develop PIMAX, a multiverse inferential framework for clustered data. By embedding the two-stage summary-statistics construction of flip2sss within the post-selection machinery of PIMA, PIMAX provides inference across a multiverse of clustered-data specifications while accounting simultaneously for within-cluster dependence and model multiplicity. The procedure accommodates heteroscedasticity and unbalanced designs, and avoids requiring a fully specified random-effects distribution or covariance structure, making it particularly suited to settings in which several fixed-effects specifications are scientifically defensible. It is implemented in the \texttt{R} package \texttt{pima} \citep{pima}.

The remainder of the paper is organized as follows. Section \ref{setting} introduces the notation and background. Section \ref{flip2sss} revisits the flip2sss procedure, extends its theoretical characterization, and recasts it in terms of cluster scores. Then Section \ref{pimax} builds on this foundation to introduce and formally define the PIMAX framework. Section \ref{simulations} investigates its finite-sample validity and power through simulations and compares its performance with competing methods, while Section \ref{application} presents an application to the survey of health, ageing, and retirement in Europe (SHARE) data \citep{borsch2013data}. Finally, Section \ref{discussion} concludes with a discussion.

\section{Setting}\label{setting}

We consider clustered data, such as longitudinal data, where responses within the same cluster may be dependent. Let $j \in \{1,\dots,J\}$ index the clusters, and let $i \in \{1,\dots,n_j\}$ index the units within cluster $j$, with total sample size $n=\sum_{j=1}^J n_j$. For unit $i$ in cluster $j$, let $Y_{ij}$ denote the response variable.

To fix notation, we consider a GLMM representation in which, conditionally on cluster-specific random effects, $Y_{ij}$ follows a distribution in the exponential family and its conditional mean $\mu_{ij}$ satisfies

\begin{equation}\label{eq:model}
g(\mu_{ij}) = \xi + x_{ij}\beta + \boldsymbol{z}_{ij}^{\top}\boldsymbol{\gamma} + \boldsymbol{w}_j^{\top}\boldsymbol{\delta} + U_j + G_j x_{ij} + \boldsymbol{D}_{j}^{\top}\boldsymbol{z}_{ij},
\end{equation}
where $g(\cdot)$ is a known link function. Here, $x_{ij}\in \mathbb{R}$ is a covariate of interest and $\beta\in \mathbb{R}$ is its fixed effect. The covariate may vary within clusters ($x_{ij}\neq x_{i'j}$ for some $i,i',j$) or be cluster-constant ($x_{ij} = x_j$); throughout, we use the notation $x_{ij}$ to cover both cases. Furthermore, $\xi \in \mathbb{R}$ represents the intercept, $\boldsymbol{z}_{ij}\in\mathbb{R}^{p}$ and $\boldsymbol{w}_j\in\mathbb{R}^{q}$ denote within- and between-cluster nuisance covariates, respectively, and $\boldsymbol{\gamma}\in\mathbb{R}^{p}$ and $\boldsymbol{\delta}\in\mathbb{R}^{q}$ collect the associated fixed effects. The cluster-specific random variables $U_j, G_j\in\mathbb{R}$ and $\boldsymbol{D}_j \in \mathbb{R}^{p}$ represent a random intercept, a random deviation in the effect of $x_{ij}$, and analogous deviations in the effects of the within-cluster nuisance covariates, respectively. This formulation is representative of a standard GLMM setup \citep{mcculloch2008generalized}.

We remark here that the model in \eqref{eq:model} is introduced primarily to fix notation and formalize the inferential setting. It should not be interpreted as a structural assumption required by the proposed method. In particular, the random-effects specification is only one possible working representation of the within-cluster dependence. The inferential framework developed below is compatible with submodels of \eqref{eq:model}, as well as with richer dependence structures obtained by extending the random component. This is because the proposed procedure does not rely on an explicit parametric specification of the random-effects distribution or covariance structure.

The inferential target is the effect of the covariate of interest on the response. Specifically, we consider the null hypothesis
\begin{equation}\label{eq:hyp_uni}
H_0:\beta=0
\qquad\text{against}\qquad
H_1:\beta\neq 0.
\end{equation}
The interpretation of $\beta$ depends on the level at which $x_{ij}$ varies. If it varies across units within clusters, $\beta$ represents an effect identified by within-cluster variation. In contrast, if $x_{ij}=x_j$ is constant within clusters, $\beta$ represents a between-cluster effect. The proposed framework extends directly to the more general null hypothesis $H_0:\beta=\beta_0$ and to one-sided alternatives, but we restrict our attention to \eqref{eq:hyp_uni} for simplicity of exposition. 

Under a conventional GLMM analysis, inference on $\beta$ is typically derived from asymptotic likelihood theory, e.g., via Wald, likelihood-ratio, or score statistics \citep{breslow1993approximate,mcculloch2008generalized}. The validity of these procedures depends critically on the correct specification of the random-effects structure \citep{heagerty2001misspecified,andreella2025robust,andreella2025blockwise,andreella2026multivariate}. In particular, omission of relevant random slopes, misspecification of the random-effects covariance matrix, or adoption of an overly restrictive dependence structure may yield biased variance estimates and, consequently, incorrect null approximations for the resulting test statistics, leading to anticonservative tests and inflated type I error. Thus, even when the fixed-effects component is correctly specified, inference on $\beta$ may be unreliable if the random part of the model is misspecified. This is methodologically consequential, since the random-effects structure is generally not identified by the data alone and must instead be selected among competing plausible parameterizations \citep{barr2013random,bates2015parsimonious,matuschek2017balancing}.

A second issue with GLMM-based testing procedures is computational. Because the GLMM likelihood generally involves integration over the random effects, estimation typically relies on numerical approximation \citep{breslow1993approximate,mcculloch2008generalized}. In finite samples, especially with binary or sparse responses, highly correlated covariates, or complex random-effects structures, this may produce non-convergence, singular fits, boundary estimates of variance components, nearly unidentified covariance parameters, and unstable standard error estimates. As a consequence, the finite-sample behavior of the resulting test statistics may deviate substantially from their nominal asymptotic approximations, with possible inflation of the type I error rate \citep{andreella2025robust,andreella2025blockwise}.

These difficulties become especially relevant when inference is carried out across a collection of plausible fixed-effects specifications. In many applications, there is uncertainty about which nuisance covariates should be included, how continuous covariates should be represented, or whether a covariate should enter the model as a within- or between-cluster effect. Accordingly, let
\begin{equation}\label{eq:multiverse}
\mathcal{M}=\{m_1,\dots,m_K\}
\end{equation}
denote a collection of plausible model specifications that may differ in the definition of the response $Y_{kij}$ (e.g., from outlier and/or leverage point removal), the predictors $x_{kij}$, $\boldsymbol{z}_{kij}$ and $\boldsymbol{w}_{kj}$ (combination and transformation), the link function $g_k$, and the estimation procedure used to fit the model (maximum likelihood estimation, Firth bias-reduced estimation \citep{firth1993bias}, penalized likelihood methods, etc.).

In this setting, the inferential problem is to perform valid inference across $\mathcal{M}$ while simultaneously accounting for within-cluster dependence and the multiplicity induced by considering multiple candidate specifications. In particular, let $\beta_k$ denote the effect of the considered covariate under specification $m_k$. Then interest lies in making confidence statements on the model-specific null hypotheses
\begin{equation}\label{eq:hyp_uni_k}
H_{0k}:\beta_k=0
\qquad\text{against}\qquad
H_{1k}:\beta_k\neq 0,\qquad k=1,\ldots,K
\end{equation}
at a pre-fixed confidence level $1-\alpha$, with $\alpha\in (0,1)$.

To address this challenge, we propose PIMAX, a multiverse inferential framework for clustered data. As shown in the next sections, PIMAX enables three types of confidence statements on the family of hypotheses $\{H_{0k}\}_{k=1}^K$: (i) a global $p$-value with weak FWER control (“Is there at least one specification with a non-zero effect?”), (ii) lower confidence bounds on the number or proportion of true discoveries (“How many?”), and (iii) multiplicity-adjusted $p$-values with strong FWER control (“Which ones?”). The adjusted $p$-values are a key feature of PIMAX, as they allow researchers to select and report a preferred specification post hoc, after seeing the data and performing the analysis, while maintaining valid type I error control.

\section{The flip2sss approach}\label{flip2sss}
The main component underlying PIMAX is the flip2sss method developed by \citet{andreella2025robust}. Given a single specification \eqref{eq:model} and the corresponding system of hypotheses \eqref{eq:hyp_uni}, flip2sss provides valid resampling-based inference on $H_0$ by reducing within-cluster dependence using cluster-level summaries, from which score statistics are then constructed, as defined below. We consider two cases: a within-cluster scenario, where the covariate of interest $x_{ij}$ varies within clusters, and a between-cluster scenario, where $x_{ij} = x_j$ is cluster-constant.

\subsection{Two-stage summaries}

\begin{definition}[First-stage model]
\label{def:first_model}
For each cluster $j=1,\dots,J$, define the first-stage GLM
\begin{equation*}
g(\mu_{ij})
=
\xi_{j}
+
\boldsymbol{r}_{ij}^{\top}\boldsymbol{\lambda}_{j},\qquad i=1,\ldots,n_j,
\end{equation*}
where $\boldsymbol{r}_{ij}\in\mathbb{R}^{d}$ contains the within-cluster covariates:
\begin{equation*}
\boldsymbol{r}_{ij} =
\begin{cases}
\left[
x_{ij},\;\boldsymbol{z}_{ij}^{\top}
\right]^\top & \text{in the within-cluster case}\\
\boldsymbol{z}_{ij} & \text{in the between-cluster case.}
\end{cases}
\end{equation*}
\end{definition}

\begin{definition}[First-stage summary statistic]
\label{def:summary_statistics}
For each cluster $j=1,\dots,J$, let
\begin{equation*}
\hat{\boldsymbol{\theta}}_{j}
=
\left[
\hat{\xi}_{j},\; \hat{\boldsymbol{\lambda}}_{j}^{\top}
\right]^{\top}
\in \mathbb{R}^{d+1}
\end{equation*}
denote the vector of cluster-specific estimators obtained from the first-stage GLM, and let
\begin{equation*}
t_{j}=\boldsymbol{a}^\top\hat{\boldsymbol{\theta}}_{j}\in\mathbb{R}
\end{equation*}
be a scalar first-stage summary statistic, where $\boldsymbol{a}\in\mathbb{R}^{d+1}$ is a contrast. In particular, let
\begin{equation*}
t_j =
\begin{cases}
\hat{\beta}_{j} & \text{in the within-cluster case}\\
\hat{\xi}_{j} & \text{in the between-cluster case,}
\end{cases}
\end{equation*}
where $\hat{\beta}_{j}$ denotes the estimate of the cluster-specific coefficient of $x_{ij}$.
\end{definition}

\begin{definition}[Second-stage working model]
\label{def:second_stage}
Consider a second-stage linear model in which $t_j$ serves as the response and the between-cluster covariates enter as predictors:
\begin{equation}
\label{eq:within_between}
t_j =\tilde{\beta} q_j + \boldsymbol{u}_j^\top \boldsymbol{\varphi} + \varepsilon_j,\qquad\varepsilon_j\sim (0,v_j)\text{ independent,}
\end{equation}
where $q_j\in\mathbb{R}$ is the target variable, constructed to capture the dependence of the original model \eqref{def:first_model} on $x_{ij}$, and $\boldsymbol{u}_j\in\mathbb{R}^{m}$ collects the nuisance covariates. Specifically,
\begin{equation*}
q_j;\; \boldsymbol{u}_j =
\begin{cases}
1;\; \boldsymbol{w}_j & \text{in the within-cluster case}\\
x_j;\; \left[1,\boldsymbol{w}_j^\top\right]^\top & \text{in the between-cluster case.}
\end{cases}
\end{equation*}
\end{definition}

To facilitate readability, Table \ref{tab:within_between_summary} summarizes the key quantities introduced in Definitions \ref{def:first_model}--\ref{def:second_stage}.

\begin{table}
\centering
\caption{Summary of quantities used in the flip2sss method in the within- and between-cluster cases.}
\label{tab:within_between_summary}
\begin{tabular}{llll}
\toprule
Stage & Quantity & Within & Between \\
\midrule

\multirow{4}{*}{First} & Covariates $\boldsymbol{r}_{ij}$ 
& $[x_{ij}, \boldsymbol{z}_{ij}^\top]^\top$ 
& $\boldsymbol{z}_{ij}$ \\

& Coefficients $\boldsymbol{\theta}_j=[\xi_j,\boldsymbol{\lambda}_j^\top]^\top$
& $[\xi_j,\beta_j,\boldsymbol{\gamma}_j^\top]^\top$ 
& $[\xi_j,\boldsymbol{\gamma}_j^\top]^\top$ \\

& Model $g(\mu_{ij}) = \xi_j + \boldsymbol{r}_{ij}^\top\boldsymbol{\lambda}_j$ 
& $g(\mu_{ij}) = \xi_j + x_{ij}\beta_j + \boldsymbol{z}_{ij}^\top\boldsymbol{\gamma}_j$ 
& $g(\mu_{ij}) = \xi_j + \boldsymbol{z}_{ij}^\top\boldsymbol{\gamma}_j$ \\

& Summary statistic $t_j$ 
& $\hat{\beta}_j$ 
& $\hat{\xi}_j$ \\

\midrule

\multirow{3}{*}{Second}

& Target variable $q_j$ & $1$ & $x_j$ \\

& Nuisance covariates $\boldsymbol{u}_j$ & $\boldsymbol{w}_j$ & $[1,\boldsymbol{w}_j^\top]^\top$ \\

& Model $t_j = \tilde{\beta} q_j + \boldsymbol{u}_j^\top \boldsymbol{\varphi} + \varepsilon_j$ 
& $t_j = \tilde{\beta} + \boldsymbol{w}_j^\top \boldsymbol{\delta} + \varepsilon_j$ 
& $t_j = \tilde{\xi} + x_j \tilde{\beta} + \boldsymbol{w}_j^\top \boldsymbol{\delta} + \varepsilon_j$ \\
\bottomrule
\end{tabular}
\end{table}

The second-stage model \eqref{eq:within_between} is treated as a working model for the cluster-level summaries. Throughout, we assume the following.

\begin{assumption}\label{ass2}
The conditional mean of the summary statistic $t_{j}$ is correctly specified by the corresponding second-stage model \eqref{eq:within_between}.
\end{assumption}

Assumption \ref{ass2} concerns only the mean structure of the cluster-level summaries. In particular, it does not require the correct specification of the random-effects distribution or the within-cluster covariance structure. Furthermore, observe that the error terms in \eqref{eq:within_between} are only assumed to be independent with zero mean, allowing for arbitrary heteroscedasticity and non-identical distributions.

Inference on $\beta$ in the original specification \eqref{eq:hyp_uni} is operationally carried out by studying $\tilde{\beta}$ in the second-stage working model \eqref{eq:within_between}:
\begin{equation}
\label{eq:hyp_uni_tilde}
\tilde{H}_0:\tilde{\beta}=0
\qquad\text{against}\qquad
\tilde{H}_1:\tilde{\beta}\neq 0.
\end{equation}

\begin{proposition}
\label{prop:null_equivalence}
Under Assumption \ref{ass2} and for the first-stage summaries in
Definition \ref{def:summary_statistics}, $H_0$ and $\tilde{H}_0$ are equivalent, in the sense that they induce the same restriction on the distribution of $\{t_j\}_{j=1}^J$.
\end{proposition}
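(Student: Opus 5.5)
The argument links the first-stage targets to the GLMM coefficients and then reads off the second-stage conditional mean under Assumption \ref{ass2}. I treat the within- and between-cluster cases separately, using the identifications in Table \ref{tab:within_between_summary}. Throughout, ``restriction on the distribution of $\{t_j\}$'' means a restriction on the conditional means $E(t_j\mid q_j,\boldsymbol{u}_j)$. Assumption \ref{ass2} says these means are exactly $\tilde\beta q_j+\boldsymbol{u}_j^\top\boldsymbol{\varphi}$, and the errors are left otherwise unrestricted (independent, mean zero, arbitrary variances $v_j$).

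\textbf{Step 1: cluster-specific coefficients.} Conditionally on the random effects, comparing \eqref{eq:model} with the first-stage GLM of Definition \ref{def:first_model} gives the following.
\begin{itemize}
\item In the within-cluster case, $\xi_j=\xi+\boldsymbol{w}_j^\top\boldsymbol{\delta}+U_j$, $\beta_j=\beta+G_j$ and $\boldsymbol{\gamma}_j=\boldsymbol{\gamma}+\boldsymbol{D}_j$. The between-cluster terms are constant within cluster $j$, so they are absorbed into the cluster intercept.
\item In the between-cluster case, $x_{ij}=x_j$, so $\xi_j=\xi+x_j\beta+\boldsymbol{w}_j^\top\boldsymbol{\delta}+U_j+G_jx_j$.
\end{itemize}

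\textbf{Step 2: conditional means of the summaries.} Assume the first-stage estimators target the $\boldsymbol{\theta}_j$, and that the random effects have mean zero, which is the usual GLMM convention. Then the first identification gives $E(t_j)=E(\hat\beta_j)=\beta+\boldsymbol{w}_j^\top\cdot 0$. Hence, in the within-cluster case, $E(t_j)=\beta\,q_j+\boldsymbol{u}_j^\top\boldsymbol{\varphi}$ with $q_j=1$, and matching with Assumption \ref{ass2} forces $\tilde\beta=\beta$ (with $\boldsymbol{\varphi}=0$, or the $\boldsymbol{w}_j$ coefficient of any cross-level terms). In the between-cluster case, $E(t_j)=E(\hat\xi_j)=\xi+x_j\beta+\boldsymbol{w}_j^\top\boldsymbol{\delta}$, which is the second-stage mean with $\tilde\beta=\beta$, $\tilde\xi=\xi$ and $\boldsymbol{\varphi}=[\xi,\boldsymbol{\delta}^\top]^\top$.

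\textbf{Step 3: equivalence.} In both cases the same set of mean vectors satisfies $\beta=0$ and $\tilde\beta=0$. Identifiability of $\tilde\beta$ requires $q_j$ not to lie in the span of the $\boldsymbol{u}_j$ across clusters (a full-rank second-stage design). Under that condition the map $\beta\mapsto\tilde\beta$ is the identity, so $H_0$ holds if and only if $\tilde H_0$ holds. Neither hypothesis constrains the variances or the higher moments of $t_j$, so the two induce the same restriction on the law of $\{t_j\}$.

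\textbf{Main obstacle.} The delicate point is Step 2. The first-stage estimators are nonlinear, so $E(\hat\beta_j)=\beta_j$ holds only asymptotically or in the linear-link case. The cleanest route is to read Assumption \ref{ass2} as \emph{postulating} the mean form and to identify $\tilde\beta$ with $\beta$ by the coefficient-matching in Step 1. Bias or non-collapsibility issues are then absorbed into the assumption, and the proof does not have to establish them.
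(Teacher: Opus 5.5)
Your argument is correct at the level of rigour the paper adopts. It shares the paper's skeleton (Assumption~\ref{ass2} plus reading off the coefficient of $q_j$), but you establish the key link differently. The paper argues qualitatively. Under $H_0$ the population summaries carry ``no systematic component'' associated with the covariate, so the coefficient of $q_j$ must vanish. Conversely, $\tilde\beta=0$ removes the $q_j$-component, and since $q_j$ ``by construction'' represents the covariate's effect, $\beta=0$. You instead substitute the working GLMM \eqref{eq:model} into the first-stage model, identify $\boldsymbol{\theta}_j$ explicitly, average over the random effects, and obtain $\tilde\beta=\beta$ by coefficient matching.

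This gives a stronger conclusion: equality of the parameters, not only of the null sets. It also makes explicit what the paper's ``by construction'' step silently uses:
\begin{itemize}
\item random effects with mean zero and independent of the covariates;
\item summaries centred at their targets $\boldsymbol{a}^\top\boldsymbol{\theta}_j$ (the paper's ``population summaries'', with finite-$n_j$ bias deferred to Proposition~\ref{prop:bias});
\item a full-rank second-stage design, without which $\tilde\beta=0$ imposes no restriction at all and the converse direction fails.
\end{itemize}
The price is that your proof is tied to the representation \eqref{eq:model}, which the paper treats only as a notational device. The paper's generic phrasing is meant to cover other first-stage reductions, but it is correspondingly harder to check.

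Two small corrections. First, in your ``main obstacle'' paragraph, absorbing first-stage bias into Assumption~\ref{ass2} does not rescue $\tilde\beta=\beta$. The assumption fixes only the linear form of $\mathbb{E}(t_j\mid q_j,\boldsymbol{u}_j)$. Any part of $\mathbb{E}(t_j)-\boldsymbol{a}^\top\mathbb{E}(\boldsymbol{\theta}_j)$ that loads on $q_j$ would therefore be absorbed into $\tilde\beta$. An example is a bias of $\hat\xi_j$ that depends on $\xi_j$, and hence on $x_j$, in the between case. So the centring at $\boldsymbol{a}^\top\boldsymbol{\theta}_j$ must be stated as the oracle reading, not derived from the assumption. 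Non-collapsibility, by contrast, does not arise, since the $\boldsymbol{\theta}_j$ are cluster-specific coefficients that are linear in the random effects.

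Second, the last sentence of your Step~3 holds only within your mean-restriction convention. For estimated summaries under a nonlinear link, $\beta$ also moves $\mathrm{Var}(t_j)$. The equivalence is therefore, as in the paper, one of mean restrictions.
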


Although the cluster-level summaries in Definition \ref{def:summary_statistics} are obtained here from cluster-specific GLMs, the flip2sss construction is not tied to this particular choice \citep{senn1990analysis}. Other first-stage reductions may be used, provided that they yield well-defined second-stage score contributions satisfying the validity conditions stated in Section \ref{subsec:score_test}. When several such reductions are scientifically defensible, the choice of first-stage summary may itself be included among the analytical decisions defining the multiverse. This flexibility is consistent with results in related settings showing that, under suitable conditions, procedures based on summary statistics need not be less efficient than analyses based on individual-level data \citep{lin2010relative}.

The validity and finite-sample performance of flip2sss nevertheless depend on the quality of the first-stage summaries. In small samples or sparse designs, cluster-level estimators may be biased or unstable, thereby affecting the centering and variability of the second-stage score statistic. Exact unbiasedness is not required, but the cumulative bias of the summaries entering the second-stage model must be asymptotically negligible, as formalized in the following.

\begin{proposition}\label{prop:bias}
Consider any second-stage statistic of the form
\begin{equation*}
S = J^{-1/2}\sum_{j=1}^J f_{j}(t_{j}),
\end{equation*}
where $f_{j}:\mathbb{R}\to\mathbb{R}$ is linear for each $j$ and the sequence $\{f_{j}\}_{j=1}^J$ is uniformly bounded. Let $t_j^\ast$ and $S^\ast$ be the oracle versions of $t_j$ and $S$, respectively, obtained by plugging-in the true value of $\boldsymbol{\theta}_{j}$ under $H_0$. If, as $J\to\infty$,
\begin{equation*}
J^{-1/2}\sum_{j=1}^J
\left|
\mathbb{E}(t_{j})-t_{j}^\ast
\right|
\to 0,
\end{equation*}
then
\begin{equation*}
\mathbb{E}(S)
-
\mathbb{E}(S^\ast)
\to 0.
\end{equation*}
\end{proposition}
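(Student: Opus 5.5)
The plan is to exploit linearity of each $f_j$ to push the expectation inside, and then use uniform boundedness to control the resulting sum by the cumulative bias assumed in the statement. Since each $f_j:\mathbb{R}\to\mathbb{R}$ is linear, we write $f_j(t)=c_j t$ for some scalar $c_j$; if affine maps are intended, $f_j(t)=b_j+c_j t$, and the constant parts cancel in the difference. We interpret ``uniformly bounded'' as the existence of a constant $C<\infty$, independent of $j$ and $J$, with $|c_j|\le C$ for all $j$. The natural reading is that $f_j$ is uniformly bounded on the relevant range, equivalently that its Lipschitz constant is bounded, since a nonzero linear map cannot be bounded on all of $\mathbb{R}$. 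The weights $f_j$ may depend on the design (for instance through $q_j$ and $\boldsymbol{u}_j$), but we treat them as fixed, non-random functions, consistent with the second-stage score being computed conditionally on covariates.

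First, we assume $\mathbb{E}|t_j|<\infty$ for each $j$, so that $\mathbb{E}(S)$ is well defined. By linearity of expectation and of $f_j$,
\begin{equation*}
\mathbb{E}(S)=J^{-1/2}\sum_{j=1}^J c_j\,\mathbb{E}(t_j).
\end{equation*}
The oracle $t_j^\ast=\boldsymbol{a}^\top\boldsymbol{\theta}_j$ is the contrast evaluated at the true parameter, so it is deterministic given the cluster-specific parameter. This gives $\mathbb{E}(S^\ast)=J^{-1/2}\sum_j c_j t_j^\ast$. If $\boldsymbol{\theta}_j$ is itself random (random effects), we condition on it throughout; the argument is unchanged, with $\mathbb{E}$ understood conditionally, which matches how the hypothesis displays $t_j^\ast$ without an expectation. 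Subtracting,
\begin{equation*}
\mathbb{E}(S)-\mathbb{E}(S^\ast)=J^{-1/2}\sum_{j=1}^J c_j\bigl(\mathbb{E}(t_j)-t_j^\ast\bigr).
\end{equation*}

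Second, the triangle inequality together with $|c_j|\le C$ yields
\begin{equation*}
\bigl|\mathbb{E}(S)-\mathbb{E}(S^\ast)\bigr|\le C\,J^{-1/2}\sum_{j=1}^J\bigl|\mathbb{E}(t_j)-t_j^\ast\bigr|.
\end{equation*}
The right-hand side tends to $0$ by the hypothesis of the proposition, which concludes the proof.

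There is no real analytic obstacle. The only delicate points are making precise what ``uniformly bounded'' means for linear maps (uniformly bounded slopes) and ensuring the $c_j$ are non-random, or at least independent of the $t_j$. If the weights were data-dependent, for example through estimated nuisance parameters, linearity of expectation would no longer factor $c_j$ out. In that case I would first condition on the covariates and any nuisance quantities entering $f_j$, apply the argument conditionally, and then take expectations, using the bound $C$ and dominated convergence.
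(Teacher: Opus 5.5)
Your proof is correct and follows the paper's argument: write each $f_j$ as affine with slopes bounded by $C=\sup_j|a_j|<\infty$, let the intercepts cancel, and bound $|\mathbb{E}(S)-\mathbb{E}(S^\ast)|$ by $C\,J^{-1/2}\sum_j|\mathbb{E}(t_j)-t_j^\ast|$. Your clarifications are the conventions the paper's proof adopts implicitly: uniform boundedness means bounded slopes, and the weights and $t_j^\ast$ are treated as non-random.
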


Proposition \ref{prop:bias} shows that exact unbiasedness of the first-stage summaries $t_j$ is not required. What is needed is that the cumulative bias of the scalar summaries entering the second-stage model be negligible at the $J^{-1/2}$ scale of the second-stage statistic $S$. In particular, the condition in Proposition \ref{prop:bias} is implied by the stronger requirement $\max_{1\le j\le J}|\mathbb{E}(t_{j})-t_{j}^{\star}|=o(J^{-1/2})$. Moreover, if $|\mathbb{E}(t_{j})-t_{j}^{\star}|=O(n_j^{-1})$ uniformly in $j$, then a sufficient condition is $J^{-1/2}\sum_{j=1}^J n_j^{-1}\to 0$. If the cluster sizes are fixed or uniformly bounded, the rate $O(n_j^{-1})$ is not sufficient to guarantee this condition, since the within-cluster bias does not vanish as $J$ increases. In such settings, the first-stage summaries must therefore be chosen or corrected so that their cumulative bias remains negligible at the $J^{-1/2}$ scale.

\subsection{Score-based test}\label{subsec:score_test}
A test for $\tilde{H}_0$, and thus for $H_0$, is constructed via score-based sign-flipping \citep{hemerik2020robust,de2025inference} at the cluster level. The required assumptions and definitions are introduced below.

\begin{assumption}\label{ass1}
The clusters $j=1,\dots, J$ are mutually independent.
\end{assumption}


\begin{definition}[Cluster-wise score]
\label{def:test_uni}
In the second-stage model \eqref{eq:within_between}, let $\hat{\zeta}_{j}=\boldsymbol{u}_j^\top\hat{\boldsymbol{\varphi}}$ denote the fitted mean of $t_j$ under $H_0$, and recall that $v_j$ is the variance of $t_j$. The cluster-wise effective score contribution of the $j$th cluster is
\begin{equation}\label{eq:contribution}
c_{j}= \left[ q_{j} - \left(
\sum_{\ell=1}^J \frac{q_{\ell}\boldsymbol{u}_{\ell}^{\top}}{v_{\ell}}
\right)
\left( \sum_{\ell=1}^J \frac{\boldsymbol{u}_{\ell}\boldsymbol{u}_{\ell}^{\top}}{v_{\ell}}
\right)^{-1} \boldsymbol{u}_{j} \right] v_{j}^{-1}(t_{j}-\hat{\zeta}_{j}).
\end{equation}
The corresponding effective score statistic \citep{hemerik2020robust} is
\begin{equation*}
S = J^{-1/2}\sum_{j=1}^J c_{j}.
\end{equation*}
The standardized score statistic \citep{de2025inference}, introduced to improve small-sample performance, is obtained by dividing the effective score by its standard deviation:
\begin{equation*}
T = \frac{S}{\sqrt{\mathrm{Var}(S)}},
\end{equation*}
where
\begin{equation*}
\mathrm{Var}(S) = J^{-1} \sum_{j=1}^J
\tilde q_{j}^{\,2}v_{j}^{-1},\qquad \tilde q_{j} = q_{j} - \left(
\sum_{\ell=1}^J \frac{q_{\ell}\boldsymbol{u}_{\ell}^{\top}}{v_{\ell}}
\right) \left( \sum_{\ell=1}^J \frac{\boldsymbol{u}_{\ell}\boldsymbol{u}_{\ell}^{\top}}{v_{\ell}}
\right)^{-1} \boldsymbol{u}_{j}.
\end{equation*}
In practice, when the variance $v_j$ is unknown, it can be replaced by a $\sqrt{J}$-consistent estimate $\hat{v}_j$ \citep{de2025inference}. 
\end{definition}

\begin{assumption}\label{ass3}
Denote by $c_{j}^{\ast}$ the oracle score contribution, obtained by replacing all estimated second-stage quantities in \eqref{eq:contribution} by their population counterparts under $H_0$.
As $J\to\infty$,
\begin{equation*}
\frac{1}{J}\sum_{j=1}^J \mathrm{Var}(c_{j}^{\ast}) \to c
\end{equation*}
for some constant $c>0$. Moreover, for every $\varepsilon>0$,
\begin{equation*}
\frac{1}{J}\sum_{j=1}^J
\mathbb{E}\left[
(c_{j}^{\ast})^2
\mathbf{1}\left\{
\frac{|c_{j}^{\ast}|}{\sqrt{J}}>\varepsilon
\right\}
\right]
\to 0.
\end{equation*}
\end{assumption}



Assumption \ref{ass3} imposes regularity conditions on the cluster-wise score contributions. It requires the effective score $S$ to have a nondegenerate asymptotic variance and rules out settings in which the limiting behavior of the test is driven by only a few clusters. 


A resampling-based test is then constructed by approximating the null distribution of the standardized score statistic $T$ by randomly flipping the signs of the cluster-wise score contributions $c_j$.

\begin{definition}[Standardized score test]
\label{def:test}
Fix a number of transformations $B\in\mathbb{N}$. For any $j=1,\ldots,J$, let $f_{j1}=1$ and, for $b=2,\dots,B$, let $f_{jb}\in\{-1,+1\}$ be independent Rademacher random variables. The $b$th sign-flipped version of the effective score is
\begin{equation}
\label{eq:score_null}
S_{b}=J^{-1/2}\sum_{j=1}^J f_{jb}c_{j}.
\end{equation}
Its standardized version is
\begin{equation}
\label{eq:score_std_null}
T_{b}
=
\frac{S_{b}}
{\sqrt{\mathrm{Var}(S_{b}\mid f_{1b},\dots,f_{Jb})}},
\end{equation}
with
\begin{equation}
\label{eq:var_score_null}
\mathrm{Var}(S_{b}\mid f_{1b},\dots,f_{Jb})
=
J^{-1}
\sum_{j=1}^J\sum_{\ell=1}^J
\tilde q_{j}v_{j}^{-1/2}
\,\hat a_{j\ell}^{(b)}\,
v_{\ell}^{-1/2}\tilde q_{\ell}
\end{equation}
and
\begin{equation*}
\hat a_{j\ell}^{(b)}
=
\sum_{r=1}^J
\left(\mathbf{1}{\{j=r\}}-\hat h_{jr}\right)\,f_{rb}
\sum_{s=1}^J
\left(\mathbf{1}\{r=s\}-\hat h_{rs}\right)\,f_{sb}\,
\left(\mathbf{1}\{s=\ell\}-\hat h_{s\ell}\right).
\end{equation*}
where $\hat h_{jr}$ denotes the $(j,r)$th entry of the weighted hat matrix.

Then $H_0$ is rejected at level $\alpha$ if
\begin{equation*}
T_1 < T_{(\lceil \alpha B/2 \rceil)}\quad\text{or}\quad T_1 > T_{(\lceil (1-\alpha/2)B \rceil)}
\end{equation*}
where $T_1$ is the observed statistic, and
$T_{(1)} \leq \cdots \leq T_{(B)}$ are the ordered values of the $B$ sign-flipped statistics. A corresponding two-sided $p$-value is given by
\begin{equation*}
p
=
\frac{1}{B}
\sum_{b=1}^B
\mathbf{1}
\left\{
\left|T_{b}\right|
\geq
\left|T_{1}\right|
\right\}.
\end{equation*}
\end{definition}

One-sided $p$-values and rejection rules are obtained analogously. In the special case with no nuisance covariates in the second-stage model, the test simplifies significantly, as the score contributions \eqref{eq:contribution} reduce to $c_j=q_j t_j/v_j$ while the variance \eqref{eq:var_score_null} becomes
\begin{equation*}
\mathrm{Var}(S_{b}\mid f_{1b},\dots,f_{Jb})
=
J^{-1}
\sum_{j=1}^J
\tilde q_{j}^{\,2}v_{j}^{-1}
\end{equation*}
and no longer depends on the sign-flip transformation.

\begin{theorem}
\label{thm:local_validity}
Under Assumptions~\ref{ass2}--\ref{ass3}, the sign-flipping test of Definition \ref{def:test} is a valid $\alpha$-level test for $H_0$, asymptotically as $J\to\infty$.
\end{theorem}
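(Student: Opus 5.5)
The plan is to reduce the statement to the asymptotic validity result for standardized sign-flipping score tests of \citet{de2025inference} (building on \citet{hemerik2020robust}), applied to the second-stage working model \eqref{eq:within_between} with the independent clusters as units. By Proposition \ref{prop:null_equivalence}, under Assumption \ref{ass2} testing $H_0$ is equivalent to testing $\tilde H_0:\tilde\beta=0$ on the summaries $\{t_j\}$. It therefore suffices to show that the test in Definition \ref{def:test} controls the type I error for $\tilde H_0$ asymptotically.

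\textbf{Step 1: oracle scores.} Work first with the oracle contributions $c_j^\ast$, in which $\hat\zeta_j$ is replaced by the true null mean $\boldsymbol u_j^\top\boldsymbol\varphi$. Under $\tilde H_0$ and Assumption \ref{ass2}, each $c_j^\ast$ has mean zero. By Assumption \ref{ass1}, the $c_j^\ast$ are independent across $j$, though not identically distributed. For any fixed sign vector, $J^{-1/2}\sum_j f_{jb}c_j^\ast$ is a sum of independent mean-zero terms. The Lindeberg condition and the nondegenerate variance in Assumption \ref{ass3} then give a joint multivariate CLT: for fixed $B$, the vector $(S_1^\ast,\dots,S_B^\ast)$, conditional on the flips, converges to a centered Gaussian. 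Its covariance entries are $c\cdot\lim J^{-1}\sum_j f_{jb}f_{jb'}\,\mathrm{Var}(c_j^\ast)/c$. Because the Rademacher flips are independent, the off-diagonal terms vanish in probability and the diagonal terms are all equal. The oracle vector is therefore asymptotically i.i.d.\ Gaussian, hence exchangeable, and the rank of $T_1$ among the $T_b$ is asymptotically uniform.

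\textbf{Step 2: estimated nuisance parameters.} This is the main obstacle. Because $\hat{\boldsymbol\varphi}$ is estimated, $c_j$ differs from $c_j^\ast$ by a projection term, and $S_b-S_b^\ast$ does not vanish for $b\geq2$. I would write the effective score as $J^{-1/2}\tilde{\boldsymbol q}^\top V^{-1/2}(I-H)V^{-1/2}\boldsymbol\varepsilon$, with $H$ the weighted hat matrix. The flipped statistic is then $J^{-1/2}\tilde{\boldsymbol q}^\top V^{-1/2}F_b(I-H)V^{-1/2}\boldsymbol\varepsilon$. Its conditional variance given the flips is exactly \eqref{eq:var_score_null}, as in \citet{de2025inference}. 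Dividing by this quantity makes each $T_b$ asymptotically $N(0,1)$ with vanishing cross-covariances for $b\neq b'$. Two facts are needed for this:
\begin{itemize}
\item $\max_j h_{jj}\to0$ and $\sum_j\tilde q_j^2/v_j$ is of order $J$, both implied by the nondegeneracy and Lindeberg parts of Assumption \ref{ass3};
\item the projection induces only $O_p(J^{-1/2})$ perturbations of the cross-covariances, and I would verify this by bounding the trace of $H$ by $m$.
\end{itemize}

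\textbf{Step 3: plug-in estimates and conclusion.} If $v_j$ is replaced by a $\sqrt J$-consistent $\hat v_j$, Slutsky's lemma preserves the joint limit of $(T_1,\dots,T_B)$. Nonzero first-stage bias is negligible by Proposition \ref{prop:bias}, since $S$ is linear in the $t_j$ with uniformly bounded coefficients. Given the flips, the $T_b$ converge jointly to i.i.d.\ standard normals, so $(T_1,\dots,T_B)$ is asymptotically exchangeable, with $T_1$ being the identity flip. Ties occur with probability tending to zero. The probability that $T_1$ falls outside the empirical $\alpha/2$ and $1-\alpha/2$ quantiles therefore tends to at most $\alpha$, and likewise $\Pr(p\le\alpha)\to\le\alpha$. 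This gives asymptotic validity as $J\to\infty$.
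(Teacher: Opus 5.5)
Your route is the paper's. You reduce $H_0$ to $\tilde H_0$ via Proposition~\ref{prop:null_equivalence}, then invoke the asymptotic validity of the standardized sign-flip score test (Theorem~2 of \citet{de2025inference}) with independent clusters as units, Assumptions~\ref{ass1} and~\ref{ass3} supplying independence, Lindeberg and nondegeneracy. The paper simply cites that theorem; you unpack its proof. Step~1 is fine, but one justification in Step~2 is false.

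\textbf{The leverage claim.} The condition $\max_j h_{jj}\to0$ is not implied by Assumption~\ref{ass3}. It is a property of the nuisance design $\boldsymbol u_j$, whereas Assumption~\ref{ass3} only constrains the oracle contributions $c_j^\ast$, whose variances are $\omega_j^2=\tilde q_j^2/v_j$. For example, a nuisance covariate that is nonzero in a single cluster gives $h_{11}=1$ for every $J$ (and $\tilde q_1=0$), yet Assumption~\ref{ass3} can still hold.

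\textbf{Why you do not need it.} Let $\boldsymbol\omega=V^{-1/2}\tilde{\boldsymbol q}$ and $F_b=\mathrm{diag}(f_{1b},\dots,f_{Jb})$. Because $\tilde{\boldsymbol q}$ is residualized, $(I-H)\boldsymbol\omega=\boldsymbol\omega$. Hence $S_1=S_1^\ast$ exactly, and
$S_b-S_b^\ast=-J^{-1/2}\boldsymbol\omega^\top F_bHV^{-1/2}\boldsymbol\varepsilon$.
Using $\mathrm{Var}(t_j)=v_j$ and independence, its variance given the flips is $J^{-1}\boldsymbol\omega^\top F_bHF_b\boldsymbol\omega$. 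Averaging over the Rademacher flips ($b\ge2$) gives $J^{-1}\sum_j\omega_j^2h_{jj}\le m\max_j\omega_j^2/J$, since $\mathrm{tr}(H)\le m$. The Lindeberg part of Assumption~\ref{ass3} implies the Feller condition $\max_j\omega_j^2/J\to0$. Therefore $S_b=S_b^\ast+o_p(1)$. The standardizing variance \eqref{eq:var_score_null} equals $J^{-1}\|\boldsymbol\omega\|^2-J^{-1}\boldsymbol\omega^\top F_bHF_b\boldsymbol\omega$, so it tends to $c$. Step~1 and Slutsky then give the joint i.i.d.\ $N(0,1)$ limit of $(T_1,\dots,T_B)$.

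So the correct ingredients are the Feller condition plus the trace bound, not small leverages. The rate this argument yields is $o_p(1)$, not the claimed $O_p(J^{-1/2})$, and $o_p(1)$ is all you need.

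\textbf{Step~3.} These additions are not needed for the statement. Under Assumption~\ref{ass2} the mean of $t_j$ is exactly the second-stage mean, so there is no bias for Proposition~\ref{prop:bias} to absorb. Its bounded-coefficient hypothesis on $\tilde q_j/v_j$ is also not among the assumptions. Likewise, Definition~\ref{def:test} uses the true $v_j$, so the $\hat v_j$ plug-in lies outside the theorem.
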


This construction combines the asymptotic validity of sign-flipping score tests with a two-stage summary-statistics representation that shifts dependence from the observation level to the cluster level without requiring specification of the random-effects distribution or covariance structure. The validity of the test is asymptotic in the number of clusters $J$, while it remains exact for any fixed number of sign-flip transformations $B$. Large values of $B$ tend to provide higher power; to ensure a non-degenerate test with positive power, it is required that $B\geq 1/\alpha$ \citep{hemerik2020robust}. Importantly, Proposition \ref{prop:bias} ensures robustness to small first-stage biases, provided that their cumulative effect is asymptotically negligible, and no distributional assumptions are imposed on the second-stage residuals beyond cluster-level independence and moment conditions.

\section{Inference in a multiverse of mixed-effects models}\label{pimax}
Consider the multiverse of candidate specifications \eqref{eq:multiverse} and the model-specific null hypotheses \eqref{eq:hyp_uni_k}. The corresponding global null hypothesis is the intersection null
\begin{equation*}
H_{\mathcal{M}}=\bigcap_{k=1}^K H_{0k},
\end{equation*}
that is, the hypothesis that the effect of interest is zero in all candidate specifications. A global test for $H_{\mathcal{M}}$ is constructed by combining information from the $K$ candidate specifications via model-specific standardized scores, following a procedure similar to PIMA \citep{girardi2024post}.

\begin{definition}[Global test]
\label{def:global_combined_statistic}
Draw $B$ random sign-flipping transformations as in Definition \ref{def:test}: for each cluster $j=1,\ldots,J$, $f_{j1}=1$, while $f_{jb}\in\{-1,+1\}$ are independent Rademacher random variables for $b=2,\ldots,B$. For each specification $m_k$ and each transformation $b$, let $T_{kb}$ be the corresponding standardized score \eqref{eq:score_std_null}. Then define the $b$th combined test statistic as
\begin{equation*}
G_b
=
\psi\left(|T_{1b}|,\ldots,|T_{Kb}|\right),
\end{equation*}
where $\psi:\mathbb{R}^K\to\mathbb{R}$ is a combining function that is non-decreasing in each argument, such as the mean and the maximum. The global null $H_{\mathcal{M}}$ is rejected at level $\alpha$ if
\begin{equation*}
G_1 > G_{(\lceil (1-\alpha)B \rceil)}
\end{equation*}
where $G_{(1)} \leq \cdots \leq G_{(B)}$. The corresponding global $p$-value is
\begin{equation*}
p_{\mathcal{M}}
=
\frac{1}{B}\sum_{b=1}^B \mathbf{1}\{G_b\ge G_1\}.
\end{equation*}
\end{definition}

The use of common sign-flip transformations across specifications preserves the dependence among model-specific score statistics and provides a resampling approximation to their joint null distribution. The combining function $\psi$ must be specified a priori, and its choice may impact the power properties of the global test. It may be defined directly on the standardized score statistics $T_{1b},\ldots, T_{Kb}$ or, alternatively, obtained by transforming them into $p$-values $p_{1b},\ldots,p_{Kb}$, either via parametric inversion or rank-based methods, and then applying classical $p$-value combination rules such as Fisher's method \citep{pesarin2001multivariate,girardi2024post}.

\begin{theorem}
\label{thm:pimax_validity}
Suppose that Assumptions~\ref{ass2}--\ref{ass3} hold for every specification $m_k\in\mathcal{M}$. Then the global test based on Definition \ref{def:global_combined_statistic} is a valid $\alpha$-level test for $H_{\mathcal{M}}$, asymptotically as $J\to\infty$.
\end{theorem}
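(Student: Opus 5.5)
Under $H_{\mathcal{M}}$ every $H_{0k}$ is true, so Assumptions~\ref{ass2}--\ref{ass3} and Proposition~\ref{prop:null_equivalence} apply to every specification at once. The plan is to extend the single-specification argument behind Theorem~\ref{thm:local_validity} to the $K$-vector of standardized scores. Once we know that the joint law of the $K$ statistics is asymptotically invariant under a common random sign flip, the combined statistics $G_b$ inherit the exchangeability needed for the standard rank argument of \citet{hemerik2020robust}. Because $\psi$ is fixed a priori and applied identically to each transformation, we never need its specific form, only that it is measurable; continuity is convenient for the continuous mapping theorem but not essential.

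\textbf{Step 1 (oracle reduction).} For each $k$, write $S_{kb}=J^{-1/2}\sum_j f_{jb}c_{kj}$. Following the argument of Theorem~\ref{thm:local_validity} (and Proposition~\ref{prop:bias} for first-stage bias), show that $S_{kb}-S^\ast_{kb}=o_p(1)$ uniformly in $b$, where $S^\ast_{kb}=J^{-1/2}\sum_j f_{jb}\tilde c^{\ast}_{kj}$. Here $\tilde c^{\ast}_{kj}$ denotes the oracle contribution projected orthogonally to the nuisance directions, so that the effective score is centred. Similarly, show that the variance estimators in \eqref{eq:var_score_null} converge to a common nondegenerate limit. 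This uses the $\sqrt{J}$-consistency of $\hat v_j$ and of the weighted hat-matrix entries $\hat h_{jr}$, together with the nondegeneracy part of Assumption~\ref{ass3}. Since $K$ is finite, these componentwise statements hold jointly in $k$.

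\textbf{Step 2 (joint CLT).} Define the $K$-dimensional vectors $\boldsymbol{c}^\ast_j=(c^\ast_{1j},\ldots,c^\ast_{Kj})^\top$. These are independent across clusters by Assumption~\ref{ass1}, even though the first-stage summaries, and hence the coordinates, are arbitrarily dependent across specifications within a cluster. Apply the Cramér--Wold device with the multivariate Lindeberg--Feller CLT. For any fixed $\boldsymbol{a}$, the Lindeberg condition for $\boldsymbol{a}^\top\boldsymbol{c}^\ast_j$ follows from the componentwise conditions in Assumption~\ref{ass3} via $(\sum_k a_kc_k)^2\le K\sum_k a_k^2c_k^2$ and a union bound on the indicator. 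For the variance condition, I will assume, as needed, that the averaged cross-covariances $J^{-1}\sum_j\mathrm{Cov}(c^\ast_{kj},c^\ast_{k'j})$ converge to a limit $\Sigma$. Otherwise the argument can be run along subsequences, since validity only requires that every subsequence have a further subsequence with the limit property. The conclusion is joint convergence of $(S^\ast_{1b},\ldots,S^\ast_{Kb})_{b=1,\ldots,B}$ to $B$ i.i.d.\ copies of $N_K(0,\Sigma)$. The same reasoning as in \citet{hemerik2020robust} applies here: for distinct transformations, $J^{-1}\sum_j f_{jb}f_{jb'}\,c^\ast_{kj}c^\ast_{k'j}$ has mean zero, so the cross-transformation covariances vanish.

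\textbf{Step 3 (conclusion).} By Steps 1 and 2 and Slutsky's theorem, the array $(T_{kb})_{k,b}$ converges jointly to a limit whose $B$ columns are i.i.d. Applying the same $\psi(|\cdot|)$ to each column then gives an exchangeable vector $(G_1,\ldots,G_B)$. With ties handled as in Definition~\ref{def:global_combined_statistic}, the rank of $G_1$ among the $G_b$ yields $\limsup_{J\to\infty}\Pr(G_1>G_{(\lceil(1-\alpha)B\rceil)})\le\alpha$, conditional on the independently drawn flips.

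The main obstacle is Step 1 for the standardized statistics when nuisance covariates are present. The denominator in \eqref{eq:var_score_null} depends on $b$ through $\hat a^{(b)}_{j\ell}$, so we must show that it converges to the same constant for all $b$. This should follow from showing that the hat-matrix entries satisfy $\max_j\hat h_{jj}\to0$, so that the off-diagonal flip terms are negligible. I would establish this first, borrowing the argument of \citet{de2025inference}.
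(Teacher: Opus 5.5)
Your proof is correct in its main line, but it takes a different route from the paper. The paper's proof is a pure reduction. Under $H_{\mathcal M}$ each specification is valid by Theorem~\ref{thm:local_validity}, and Definition~\ref{def:global_combined_statistic} applies the same cluster-level flips to all specifications. The paper then invokes Theorem~3.2 of \citet{de2025permutation}, with clusters as independent units and specifications as parallel models, to conclude that the flip distribution of $G_b$ approximates the null law of $G_1$.

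You instead prove that content directly: a per-specification oracle reduction, a Cram\'er--Wold/Lindeberg CLT for the whole $K\times B$ array with i.i.d.\ limiting columns, a common limit of the standardizers at $b=1$ and $b\ge 2$, and the exchangeability rank argument. The paper's route is short and inherits whatever conditions the cited theorem carries. Yours is self-contained and makes explicit what marginal validity of each specification does not give by itself, namely joint convergence across specifications.

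You rightly note that Assumption~\ref{ass3} says nothing about $J^{-1}\sum_j\mathrm{Cov}(c^\ast_{kj},c^\ast_{k'j})$. These averages are bounded by Cauchy--Schwarz, so your subsequence argument does settle that point. Your ``main obstacle'' is also milder than you expect. Take $a_j=\tilde q_j v_j^{-1/2}$, $F_b=\mathrm{diag}(f_{1b},\dots,f_{Jb})$, and $H_s$ the symmetric weighted hat matrix of the nuisance design, so that $H_s a=0$. Then $J\,\mathrm{Var}(S_b\mid f_{1b},\dots,f_{Jb})=\|a\|^2-\|H_sF_ba\|^2$. For $b\ge 2$, $\mathbb{E}_F\|H_sF_ba\|^2=\sum_j h_{jj}a_j^2\le m\max_j a_j^2=o(J)$. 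This uses $\mathrm{tr}(H_s)=m$, $a_j^2=\mathrm{Var}(c^\ast_j)$, and the fact that the Lindeberg part of Assumption~\ref{ass3} forces $\max_j\mathrm{Var}(c^\ast_j)=o(J)$. So no condition on $\max_j h_{jj}$ is needed.

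One claim should be retracted: a merely measurable $\psi$ does not suffice. Step~3 transfers the joint limit of $(T_{kb})$ to $(G_b)$, which is precisely the continuous mapping theorem. That requires $\psi$ to be continuous outside a null set of the limiting law. Moreover, the rejection region $\{G_1>G_{(\lceil(1-\alpha)B\rceil)}\}$ is open, so the portmanteau theorem only bounds its limiting probability by that of its closure. To reach $\alpha$ you therefore need $\Pr\{G^\infty_1=G^\infty_{(\lceil(1-\alpha)B\rceil)}\}=0$, or a separate argument for ties. Both conditions hold for $\psi$ equal to the mean or the maximum. These are continuous, and they are (multiples of) the $\ell_1$ and $\ell_\infty$ norms of a centered Gaussian vector whose coordinates have positive variance, so their laws are atomless.

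Finally, state the conclusion unconditionally over data and flips. Your Step~2 CLT averages over the Rademacher variables. For a fixed flip sequence, the cross-transformation covariances $J^{-1}\sum_j f_{jb}f_{jb'}\mathrm{Cov}(c^\ast_{kj},c^\ast_{k'j})$ vanish only in probability over the flips.
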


Theorem \ref{thm:pimax_validity} extends local validity of inference in flip2sss to the multiverse level. Rejection of the global null hypothesis indicates evidence that the effect of interest is non-null in at least one of the considered specifications, with weak FWER control.

The same approach extends to arbitrary subsets of specifications. For any non-empty index set $I\subseteq\{1,\dots,K\}$, define the sub-collection $\mathcal{I}=\{m_k\in\mathcal{M}\,:\,k\in I\}$ and the corresponding intersection null hypothesis
\begin{equation*}
H_{\mathcal{I}}
=
\bigcap_{k\in I} H_{0k}.
\end{equation*}
A global test for $H_{\mathcal{I}}$ is obtained analogously to Definition \ref{def:global_combined_statistic}, by computing a subset-specific combined statistic
\begin{equation*}
G^{I}_b
=
\psi_I\left((|T_{kb}|)_{k\in I}\right),
\end{equation*}
where $\psi_I:\mathbb{R}^{|I|}\to\mathbb{R}$, and comparing the observed value $G^I_1$ with the sign-flip distribution. Under the same conditions as in Theorem \ref{thm:pimax_validity}, this yields an asymptotically valid test of $H_{\mathcal{I}}$.

These subset tests provide the building blocks for closed testing procedures \citep{marcus1976closed}. As the tests are asymptotically valid, closed testing yields asymptotically valid multiplicity-adjusted $p$-values for each specification, with strong control of the FWER over $\mathcal{M}$. In this way, it allows researchers to identify the specifications with a non-zero effect. Closed testing enjoys an optimality property in the sense that any multiple testing method controlling the FWER or related error measures is either equivalent to a closed testing procedure or can be improved by one \citep{goeman2021only}. In practice, full closed testing can be computationally demanding and may become infeasible when the number $K$ of specifications is large. In such cases, one can adopt the combination function $\psi=\max$ and rely on the corresponding maxT algorithm, which represents a computationally efficient shortcut \citep{westfall1993resampling}.

Furthermore, closed testing and its shortcuts provide a multiverse-level summary of the evidence, as they can be used to derive simultaneous lower confidence bounds on the number of false null hypotheses, either in $\mathcal{M}$ or within subsets $\mathcal{I}\subseteq\mathcal{M}$ \citep{goeman2011multiple,blanchard2020post,blain2022notip,andreella2023permutation,vesely2023permutation}. Simultaneity ensures validity even when the subset is selected post hoc. This allows one to quantify how many candidate specifications support the presence of an effect. Although this approach may appear less informative than the maxT procedure, alternative choices of $\psi$ may be more sensitive to different signal structures and gain higher power in certain settings. Guidance on selecting the combining function based on the expected type of signal can be found in \citet{vesely2023permutation}.

Finally, the same approach extends directly to multiple parameters of interest. As argued by \citet{girardi2024post}, global test statistics can be defined for all individual parameters of interest by applying the same random sign-flipping transformations. This allows one to test an overall global null hypothesis of no effect in any model. Moreover, the resulting tests can be embedded within a closed testing framework to obtain multiplicity-adjusted $p$-values for each parameter and each model, as well as simultaneous lower confidence bounds on the number of true effects, both overall and stratified by model or by coefficient.

To summarize, PIMAX combines the cluster-level robustness of the two-stage sign-flipping approach of flip2sss with the multiverse resampling and post-selection guarantees of PIMA. The resulting procedure extends multiverse inference to clustered-data settings without requiring a fully specified random-effects distribution or covariance structure. We next examine the empirical behavior of PIMAX and compare it with existing alternatives.













\section{Simulations}\label{simulations}
The performance of PIMAX is evaluated in terms of type I error control and power and compared with GLMM-based inference \citep{bolker2009generalized}. Since the latter yields model-specific inference, multiplicity within the multiverse is handled by Holm-Bonferroni adjustment \citep{holm1979simple}. We examine two data-generating scenarios, corresponding to the within- and between-cluster settings considered in Definitions \ref{def:first_model}--\ref{def:second_stage}. 

For each design condition, the number of clusters is set to $J\in\{20,30,40\}$ and the common cluster size to $n_j\in\{10,20\}$. The response $Y_{ij}$ follows a Bernoulli distribution with success probability $\pi_{ij}\in(0,1)$ and canonical parameter $\eta_{ij}=\text{logit}(\pi_{ij})$ generated according to model \eqref{eq:model}, with nuisance covariates $z_{ij},w_j\in\mathbb{R}$ ($q=p=1$) and nuisance fixed effects $\xi=\gamma=\delta=0.5$. Cluster-specific random effects are generated from a trivariate Gaussian distribution as
$(U_j,G_j,D_j)^\top \sim \mathcal{N}(\mathbf{0},\Sigma)$ where $\Sigma = A R A$, with
$A=\mathrm{diag}(1.0,0.7,0.4)$ and $R$ an equicorrelation matrix with off-diagonal entries equal to $0.3$.

To mimic a multiverse of plausible and highly related models, we follow the latent-variable construction used in \citet{simonsohn2020specification} and subsequently in \citet{girardi2024post}. In each setting, the response is simulated from an underlying latent variable of interest $\ell_{ij}$, together with the nuisance covariates $z_{ij}$ and $w_j$, each correlated at level $0.60$ with $\ell_{ij}$. The latent variable is not directly observed; inference is instead based on proxy covariates derived from it. A multiverse of $K=100$ specifications is constructed, each defined by an alternative proxy $x_{kij}$ having correlation $0.85$ with $\ell_{ij}$. Operationally, the data are simulated in a sequential way, starting from $w_j\sim\mathcal{N}(0,1)$; then $\ell_{ij}$ is generated conditionally on $w_j$, and finally $z_{ij}$ and $x_{kij}$ as noisy transformations of $\ell_{ij}$. Table \ref{tab:sim_summary} summarizes this mechanism, presenting the variables in the order in which they are computed.

In the within-cluster scenario, $\ell_{ij}$ and $x_{kij}$ vary at the observation level, and the linear predictor is
\begin{equation*}
\eta_{ij} = \xi+\gamma z_{ij}+\delta w_j+\beta \ell_{ij}+U_j+G_j \ell_{ij}+D_j z_{ij}.
\end{equation*}
In contrast, in the between-cluster scenario, $\ell_j$ and $x_{kj}$ vary only at the cluster level, and
\begin{equation*}
\eta_{ij} = \xi+\gamma z_{ij}+\delta w_j+\beta \ell_{j}+U_j+D_j z_{ij}.
\end{equation*}
Each design condition is replicated $1000$ times.

\begin{table}
\centering
\caption{Simulations: data-generating mechanism. Here, $f_\rho(t)=\rho t + \sqrt{1-\rho^2}\,\varepsilon$, with $\varepsilon\sim\mathcal{N}(0,1)$. The errors $\varepsilon$ vary within- or between-cluster, according to the variable that is generated via $f_\rho$.}
\label{tab:sim_summary}
\begin{tabular}{lll}
\toprule

\textbf{Quantity} & \textbf{Formula
}& \textbf{Level} \\
\midrule

Between-cluster nuisance & $w_j\sim\mathcal{N}(0,1)$ & between \\

Latent variable & $\ell_{ij}=f_{0.60}(w_j)$ & within ($\ell_{ij}$) / between ($\ell_{j}$)\\

Within-cluster nuisance & $z_{ij}=f_{0.60}(\ell_{ij})$ & within\\

Target proxy & $x_{kij}=f_{0.85}(\ell_{ij})$ & same as the latent variable\\

\bottomrule
\end{tabular}
\end{table}


Inference on the effect of interest at significance level $\alpha=0.05$ is then carried out using PIMAX, with first-stage summary statistics estimated by Firth-corrected logistic regression \citep{firth1993bias}. This choice aims at reducing small-sample bias and separation issues in the cluster-specific binary regressions. At the multiverse level, the global null hypothesis of absence of any effect is tested with the test of Definition \ref{def:global_combined_statistic}, taking $B=1000$ transformations and both $\psi=\text{mean}$ and $\psi=\max$ as combining functions. As described above, the competing GLMM-based inferential procedure is applied separately to each specification and adjusted across specifications using Holm-Bonferroni. The fitted GLMM includes a random intercept and a random slope for $z_{ij}$; in the within-cluster scenario, it additionally includes a random slope for $x_{kij}$.

First, the type I error is evaluated under $\beta=0$, with results displayed in Figures \ref{fig:error_within} and \ref{fig:error_between}. In both settings, GLMM-based inference exhibits substantial inflation of the type I error rate, especially for smaller values of $J$, with the problem being particularly severe in the between-cluster case. PIMAX, in both its mean and max versions, never exceeds the nominal level, after accounting for the $95\%$ simulation confidence interval.

\begin{figure}
    \centering
    \includegraphics[width=.8\linewidth]{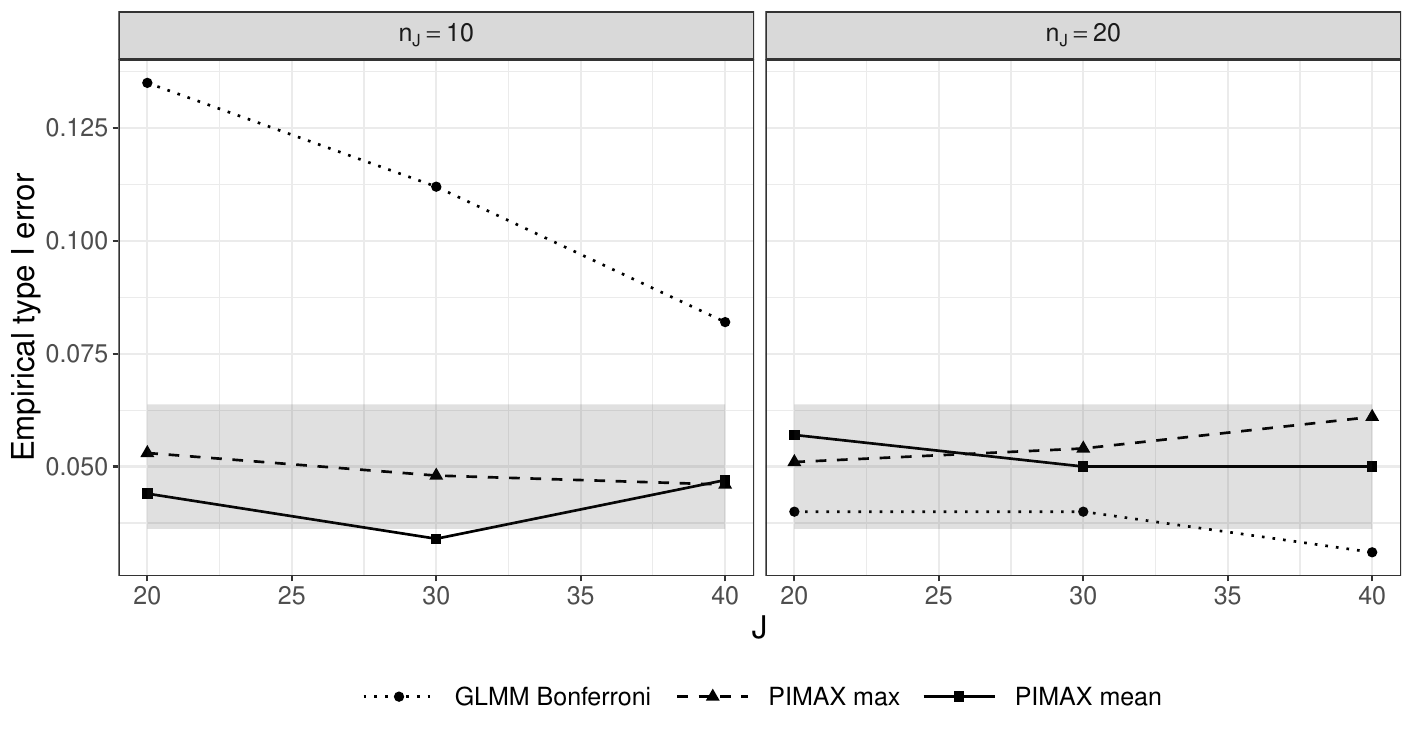}
    \caption{Simulations: empirical type I error in the within-cluster scenario. The shaded horizontal band represents the 95\% simulation interval around the nominal level $\alpha=0.05$.}
    \label{fig:error_within}
\end{figure}

\begin{figure}
    \centering
    \includegraphics[width=.8\linewidth]{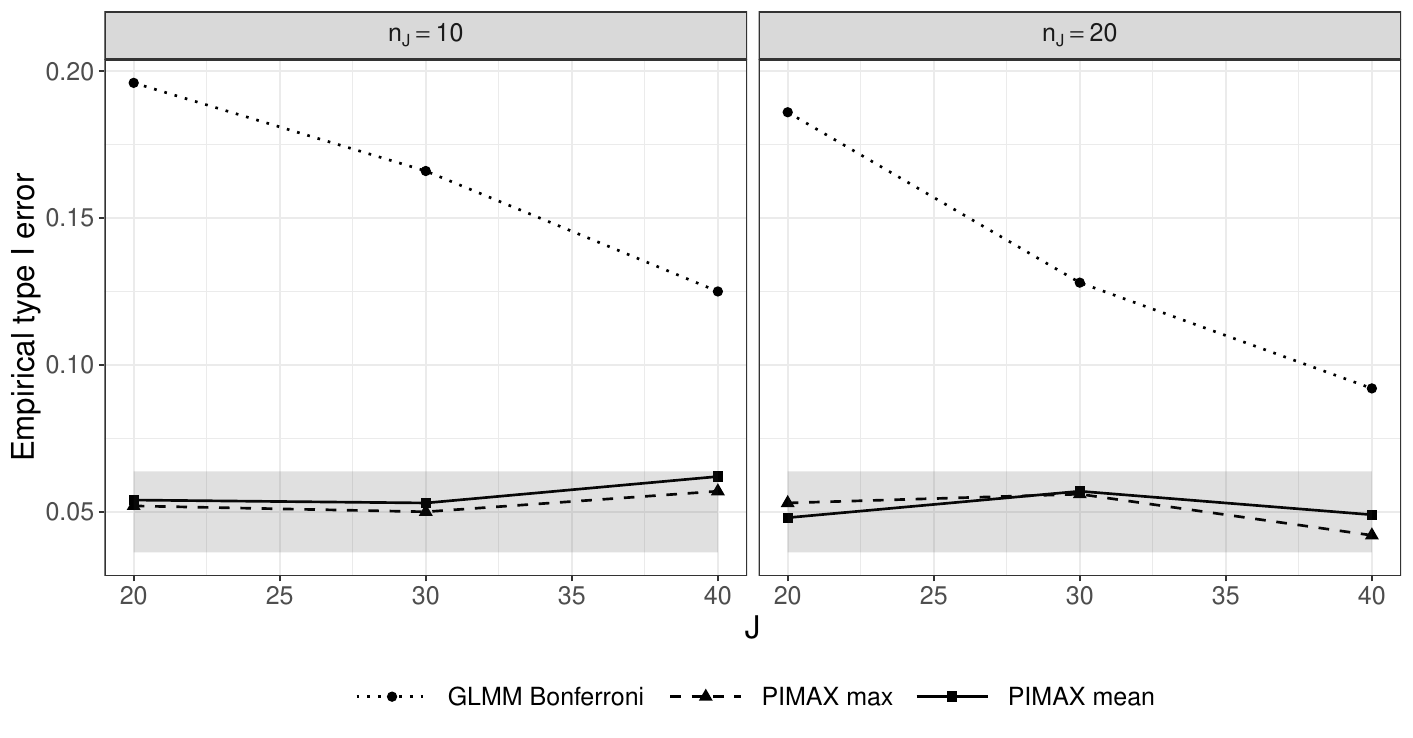}
    \caption{Simulations: empirical type I error in the between-cluster scenario. The shaded horizontal band represents the 95\% simulation interval around the nominal level $\alpha=0.05$.}
    \label{fig:error_between}
\end{figure}

Subsequently, statistical power is assessed by taking $\beta=0.5$ and reporting the proportion of simulations where the global null hypothesis is rejected, as shown in Figures \ref{fig:power_within} and \ref{fig:power_between}. The mean version of PIMAX is uniformly the most powerful procedure. The max version is more conservative, but still remains competitive in many settings; GLMM-based inference is both valid and superior to the latter only in the within-cluster case with $n_j=20$.

\begin{figure}
    \centering
    \includegraphics[width=.8\linewidth]{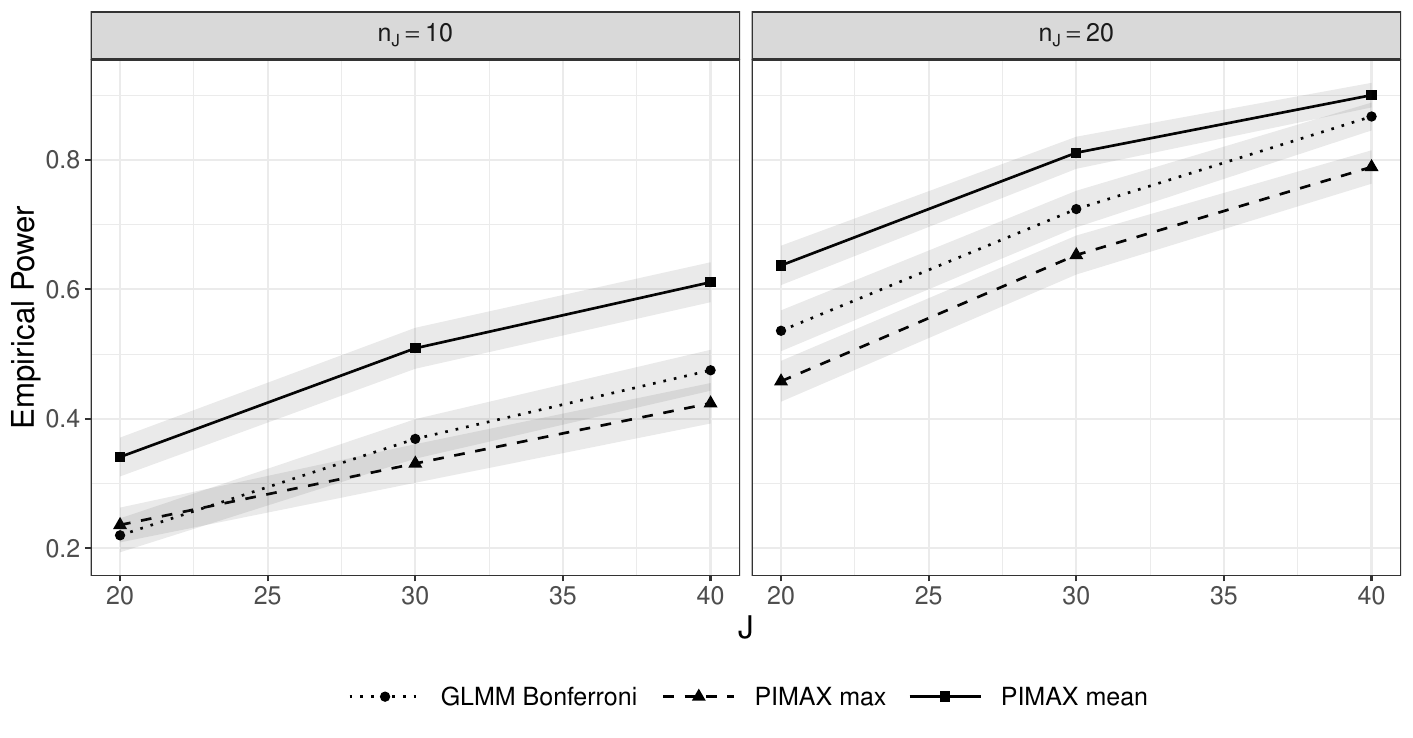}
    \caption{Simulations: empirical power in the within-cluster scenario. Shaded bands represent pointwise $95\%$ simulation intervals.}
    \label{fig:power_within}
\end{figure}

\begin{figure}
    \centering
    \includegraphics[width=.8\linewidth]{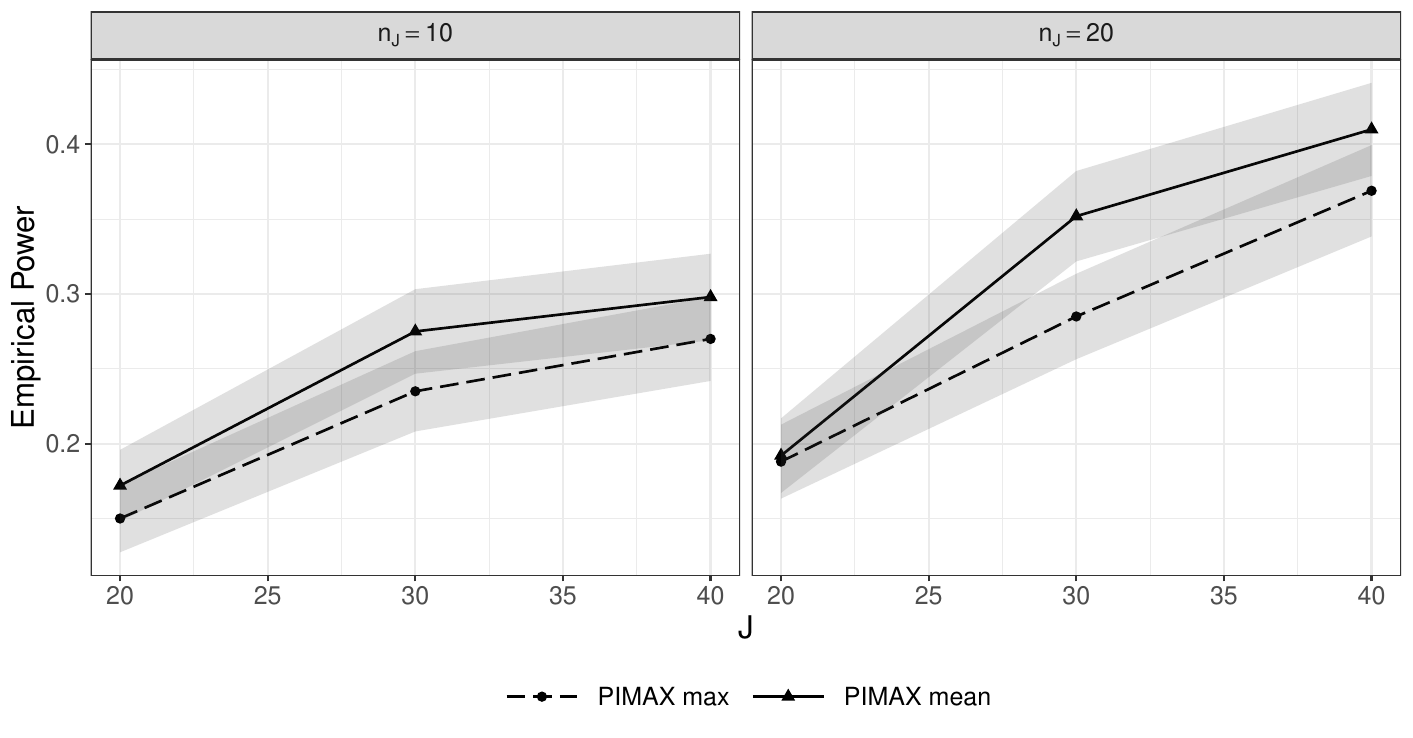}
    \caption{Simulations: empirical power in the between-cluster scenario. Shaded bands represent pointwise $95\%$ simulation intervals. Results for GLMM-based inference are omitted, as it fails to control the type I error rate in all considered settings.}
    \label{fig:power_between}
\end{figure}

Overall, the simulations suggest that PIMAX attains the intended balance between error control and sensitivity, with the mean combining function appearing particularly effective in the considered designs.

\section{SHARE data analysis}\label{application}

We illustrate the proposed methodology using data from SHARE \citep{borsch2013data}, a longitudinal survey collecting socio-demographic information on individuals aged 50 years and older in Europe. Our goal is to identify individual characteristics associated with disability among older adults in Italy. We study all available SHARE waves from 2004 to 2022 (waves 1--9), excluding wave 3, for which the imputed datasets required for the analysis are not available. The study is restricted to the main Italian respondents, yielding a sample of $5{,}325$ subjects. As is typical in longitudinal surveys, participants may enter the study after the first wave or leave before its conclusion, resulting in an unbalanced panel.

As the response, we take the global activity limitation indicator (GALI), a binary measure of long-term activity limitations due to health problems that is widely used as a proxy for disability in population health research \citep{van2018measuring}. As potential predictors, we examine sex, age, their interaction, education level, partnership status (capturing daily social support), financial distress (as a proxy for economic status), and chronic morbidity. These variables admit several operationalizations, as follows.

First, age is modeled either as a linear or a quadratic effect; more flexible parameterizations, such as spline-based effects or age categories, were not considered because they frequently resulted in unstable estimates or convergence issues. Second, education is reported according to the international standard classification of education \citep[ISCED]{unesco2003isced}, which distinguishes several ordered levels of educational attainment. We consider two codings: a three-level ordinal variable (primary, secondary, and post-secondary) and a binary indicator for having at least a secondary education. Partnership status is derived from the SHARE marital-status variable, which distinguishes several legal and living arrangements; we examine the presence of either any partner or a cohabiting partner. Financial distress is measured using the question on how easily households make ends meet, with four ordered response categories ranging from living comfortably to experiencing great difficulty. This variable is dichotomized to account for either any financial difficulty or only severe distress. Finally, chronic morbidity is specified in three different ways: the total count of chronic illnesses, an indicator for at least one condition, and an indicator for at least two conditions, capturing alternative cumulative and threshold effects.

With the exception of age, the predictors exhibit little or no within-subject variation over time and can be regarded as approximately time-invariant (i.e., between-subject) characteristics. To improve estimation stability and simplify model specification, we therefore summarize these covariates at the subject level. Specifically, for each individual, we use the average number of chronic illnesses across observed waves and the median value of the remaining covariates, corresponding to the individual's typical state during follow-up. Age is the only time-varying covariate and thus the only within-subject predictor.

The resulting multiverse consists of 48 models, each corresponding to a distinct combination of the alternative variable specifications described above. Since inference is performed for all regression coefficients, the analysis comprises a total of 408 hypothesis tests. We apply PIMAX using Firth-corrected logistic regression in the first stage, and $B=5000$ sign-flip transformations in the second stage, obtaining multiplicity-adjusted $p$-values for each model-coefficient pair via the maxT shortcut procedure. We set $\alpha=0.05$. The multiverse could be expanded further by including additional modeling choices, such as alternative summary statistics. Here, we focus on estimates from Firth-corrected logistic regression, since standard maximum likelihood estimation in logistic regression can be unstable in small-sample settings with sparse binary outcomes or quasi-complete separation \citep{firth1993bias,heinze2006comparative}.

The estimated effects vary substantially across specifications, highlighting the importance of accounting for specification uncertainty. Figure \ref{fig:share_chronic} reports the estimated effects of chronic morbidity under the different models, where, depending on the specification of chronic morbidity itself and the other covariates, the estimated association with the GALI ranges from the expected positive effect to an apparently protective (negative) effect. This variability illustrates how conventional analyses based on a single specification may lead to substantially different scientific conclusions. In some specifications, the estimated effects become particularly counterintuitive: for example, Table \ref{tab:share_model} reports a model in which a higher number of chronic illnesses is associated with lower disability, while higher educational attainment appears to increase the GALI score.

\begin{figure}
    \centering
    \includegraphics[width=.8\linewidth]{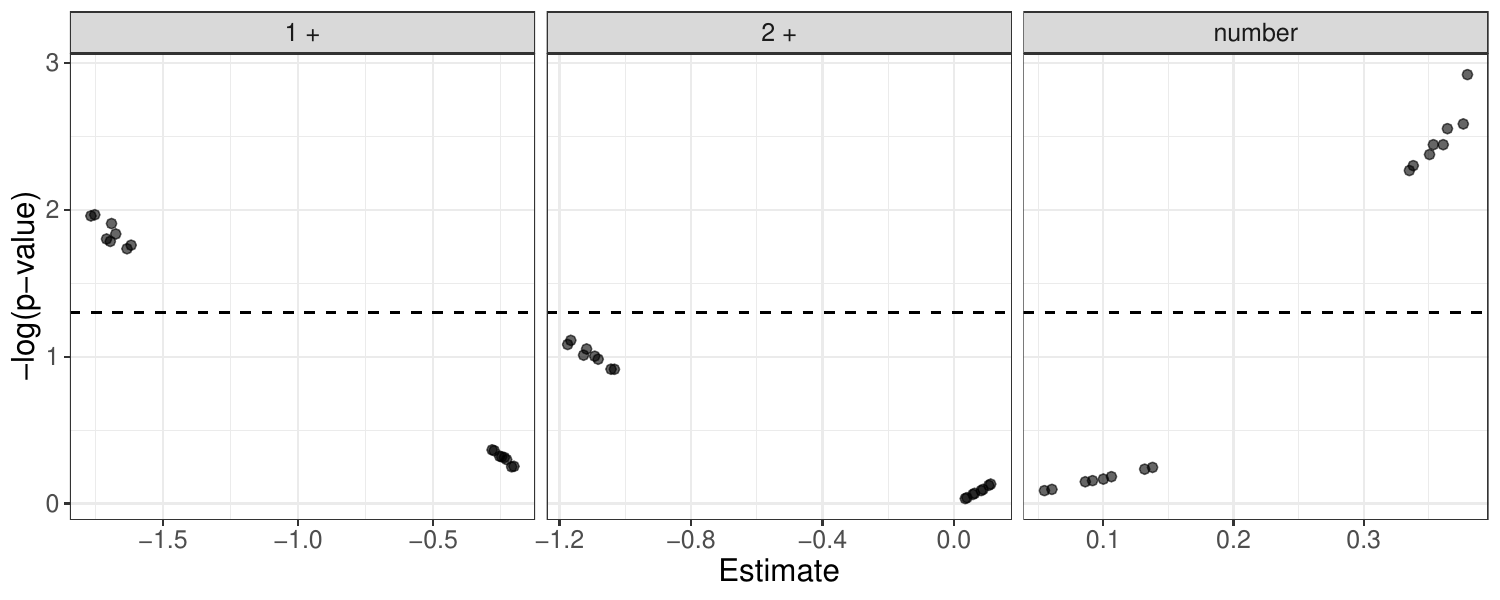}
    \caption{SHARE data: estimated effect of chronic morbidity across the multiverse versus $-\log_{10}(p)$, where $p$ denotes the raw $p$-value. The horizontal line indicates the significance level $\alpha=0.05$; points above it correspond to significant effects.}
    \label{fig:share_chronic}
\end{figure}

\begin{table}
\centering
\caption{SHARE data: coefficient estimates and raw $p$-values from a single selected model specification.}
\label{tab:share_model}
\begin{tabular}{lrr}
\toprule
\textbf{Predictor} & \textbf{Estimate} & \textbf{$p$-value}\\
\midrule
sex     & $2.34$ & $0.0002$ \\
age (linear)    & $0.07$ & $0.0002$ \\
sex:age     & $-0.03$ & $0.0004$ \\
education (secondary)     & $0.62$ & $0.3712$ \\
education (post-secondary)     & $2.22$ & $0.0428$ \\
partner (cohabiting)     & $-0.23$ & $0.7360$ \\
financial stress (medium/strong)   & $1.79$ & $0.0158$ \\
chronic ($1+$)   & $-1.71$ & $0.0064$ \\
\bottomrule
\end{tabular}
\end{table}

Table \ref{tab:share_summary} summarizes the results of the full analysis by reporting, for each predictor, the percentage of model specifications in which its effect is declared significant before and after multiplicity adjustment. After multiplicity correction, only age and sex remain significant in most model specifications. This is consistent with epidemiological evidence identifying older age and female sex as the most robust predictors of disability in later life \citep{istat2021anziani}. In contrast, the evidence for the remaining socio-economic and health-related predictors is substantially attenuated once specification uncertainty is accounted for, suggesting that their associations are less robust and more sensitive to modeling choices.

\begin{table}
\centering
\caption{SHARE data: proportion of model specifications in which each predictor is significant before (raw $p$-values) and after multiplicity adjustment (PIMAX).}
\label{tab:share_summary}
\begin{tabular}{lrr}
\toprule
\textbf{Predictor} & $\boldsymbol{\%}$ \textbf{Models (raw)} & $\boldsymbol{\%}$ \textbf{Models (adj.)}\\
\midrule
sex  & $100.0$ & $70.8$ \\
age     & $100.0$ & $100.0$ \\
sex : age & $100.0$ & $0.0$ \\
education     & $20.8$ & $0.0$ \\
partner & $0.0$ & $0.0$ \\
financial stress  & $50.0$ & $0.0$ \\
chronic & $33.3$ & $0.0$ \\
\bottomrule
\end{tabular}
\end{table}

We underline that this application is intended as a simple illustrative example to demonstrate how the proposed method operates in practice. However, as emphasized by \citet{girardi2024post}, multiverse analysis is not an invitation to indiscriminately expand the model space. A proper multiverse should not be constructed from all possible combinations of covariate definitions, but rather restricted to a small set of scientifically plausible models that are expected to reasonably approximate the data-generating reality. In other words, one must think carefully before testing.

\section{Discussion}\label{discussion}

We propose PIMAX, a method for post-selection inference in multiverse analysis with clustered observations. The method combines the two-stage summary-statistics framework of flip2sss \citep{andreella2025robust} with the
multiverse inference approach of PIMA \citep{girardi2024post}. In this way, PIMAX extends PIMA to settings where observations within the same cluster may be dependent. The method handles unbalanced clusters and heteroscedasticity, and does not require specifying a full random-effects covariance structure.

From a methodological perspective, the main contribution of PIMAX is to separate the two sources of complexity that typically occur together in clustered-data applications: dependence and model multiplicity. Dependence is handled through the cluster-level reduction induced by the first-stage summaries, whereas multiplicity is addressed through common sign-flip transformations across specifications and the closed-testing principle. This yields a framework that supports both global and post-selection inference across a multiverse of plausible mixed models.

A useful feature of our proposal is that it avoids making inference depend on a single, fully specified random-effects structure. This is important because the specification of the random component in mixed models is often difficult, can be unstable, and is rarely unique in practice. PIMAX shifts the focus to the fixed-effects specifications that define the scientific multiverse, while accounting for clustering through its two-stage construction. In this sense, it allows researchers to explore different plausible fixed-effects models without having to resolve the random-effects specification problem within each candidate model.

As with related resampling-based and post-selection methods, some points should be kept in mind. The theoretical results are asymptotic in the number of clusters, so the method is expected to provide a better approximation and more accurate inference when the number of clusters is sufficiently large. The performance also depends on the first-stage summaries. If these summaries are unstable or strongly biased, for example, in very small or sparse clusters, the second-stage test may be affected. Finally, as in any multiverse analysis, the set of candidate specifications and all components of the multiverse design (including first-stage summaries, number of sign-flip transformations, combining function, and significance level) should be defined a priori, before any results are inspected \citep{girardi2024post}. These requirements are not specific to PIMAX, but reflect standard conditions for valid resampling-based and post-selection inference. Finally, future work could investigate extensions of PIMAX to more complex forms of dependence, including crossed random-effects structures.




\section*{Declarations}

\paragraph{Funding}
Anna Vesely acknowledges financial support from the Italian Complementary National Plan PNC-I.1 “Research initiatives for innovative technologies and pathways in the health and welfare sector” D.D.~931 of 06/06/2022, “DARE - DigitAl lifelong pRevEntion" initiative, code PNC0000002, CUP: B53C22006450001.

\paragraph{Conflict of interest}
The authors declare no conflict of interest.

\paragraph{Ethics approval and consent to participate}
Not applicable.

\paragraph{Consent for publication}
Not applicable.

\paragraph{Data availability}
The SHARE data used in this study are publicly available from the SHARE Research Data Center: \url{https://share-eric.eu/}.

\paragraph{Code availability}
The code used for the simulation study and the SHARE data analysis is available at: \url{https://github.com/angeella/PIMAX}.

\paragraph{Author contributions}
Both authors contributed equally to this work.

\begin{appendices}

\section{Proofs}\label{secA1}

\subsection{Proposition \ref{prop:null_equivalence}}

\begin{proof}
We first prove that $H_0:\beta=0$ implies
$\widetilde H_0:\widetilde\beta=0$. Under $H_0$, the covariate of interest has
no systematic effect on the response. Hence, in the within-cluster case, where
$t_j=\widehat\beta_j$, the population summaries have no systematic component
associated with the effect of $x_{ij}$. In the between-cluster case, where
$t_j=\widehat\xi_j$ and $q_j=x_j$, the population summaries do not depend
systematically on $x_j$.

By Assumption \ref{ass2},
\[
    \mathbb E(t_j\mid q_j,\boldsymbol u_j)
    =
    \widetilde\beta q_j+\boldsymbol u_j^\top\boldsymbol\varphi .
\]
Therefore, if the population summaries contain no systematic target component,
the coefficient of $q_j$ in this conditional mean must be zero, that is
$\widetilde\beta=0$.

Conversely, suppose that $\widetilde\beta=0$. Then, by Assumption
\ref{ass2},
\[
    \mathbb E(t_j\mid q_j,\boldsymbol u_j)
    =
    \boldsymbol u_j^\top\boldsymbol\varphi ,
\]
so the conditional mean of the cluster summaries contains no systematic component associated with $q_j$. Since, by construction, $q_j$ represents the effect of the covariate of interest in the second-stage model, this implies absence of the corresponding effect in the original model, namely $\beta=0$.
\end{proof}

\subsection{Proposition \ref{prop:bias}}

\begin{proof}
Since $f_j$ is linear, we can write $f_j(t)=a_jt+b_j$, with
$C=\sup_j |a_j|<\infty$. By definition,
$S^\ast = J^{-1/2}\sum_{j=1}^J f_j(t_j^\ast)$. Therefore,
\begin{align*}
\mathbb{E}(S)-\mathbb{E}(S^\ast)
&=
J^{-1/2}\sum_{j=1}^J
\left\{
\mathbb{E}[f_j(t_j)]-f_j(t_j^\ast)
\right\}  
=
J^{-1/2}\sum_{j=1}^J
a_j\{\mathbb{E}(t_j)-t_j^\ast\},
\end{align*}
and so
\[
\left|\mathbb{E}(S)-\mathbb{E}(S^\ast)\right|
\leq
C J^{-1/2}\sum_{j=1}^J
\left|\mathbb{E}(t_j)-t_j^\ast\right|.
\]
The right-hand side converges to zero by assumption, and therefore
$\mathbb{E}(S)-\mathbb{E}(S^\ast)\to 0$, as $J\to\infty$.
\end{proof}

\subsection{Theorem \ref{thm:local_validity}}

\begin{proof}
By Proposition \ref{prop:null_equivalence}, under Assumption \ref{ass2},
$H_0:\beta=0$ is equivalent to
$\widetilde H_0:\widetilde\beta=0$ in the second-stage working model.
It is therefore sufficient to prove the validity of the second-stage score test. Under Assumption \ref{ass1}, the cluster-level score contributions are mutually independent. Assumption \ref{ass3} gives the Lindeberg condition and a non-degenerate limiting variance for the oracle effective score.
Thus, applying Theorem 2 from \citet{de2025inference} to the cluster-level second-stage score contributions, the sign-flipping distribution consistently approximates the null distribution of the standardized score statistic under $\widetilde H_0$. Hence the rejection rule in Definition \ref{def:test} has asymptotic level $\alpha$ under
$\widetilde H_0$, and therefore, by Proposition \ref{prop:null_equivalence}, under $H_0$.
\end{proof}

\subsection{Theorem \ref{thm:pimax_validity}}

\begin{proof}
Under $H_{\mathcal M}=\bigcap_{k=1}^K H_{0k}$, all model-specific null hypotheses are true. By Theorem \ref{thm:local_validity}, the
standardized sign-flipping score statistic is asymptotically valid for each specification. Definition \ref{def:global_combined_statistic} applies the same cluster-level sign-flips to all specifications and combines the resulting statistics through a pre-specified non-decreasing function $\psi$. Therefore, Theorem 3.2 from \citet{de2025permutation} applies, with clusters as independent units and specifications as parallel models. It follows that the sign-flipping distribution of $G_b$ consistently approximates the null distribution of $G_1$ under $H_{\mathcal M}$, and the test has asymptotic level $\alpha$.
\end{proof}





\end{appendices}

\bibliographystyle{apalike} 
\bibliography{sn-bibliography.bib}

@article{andreella2025robust,
  title={Robust inference for generalized linear mixed models: a “two-stage summary statistics” approach based on score sign flipping},
  author={Andreella, Angela and Goeman, Jelle and Hemerik, Jesse and Finos, Livio},
  journal={Psychometrika},
  volume={90},
  number={2},
  pages={531--553},
  year={2025},
  publisher={Cambridge University Press}
}

@article{goeman2011multiple,
  title={Multiple testing for exploratory research},
  author={Goeman, Jelle J and Solari, Aldo},
  journal={Statistical Science},
  pages={584--597},
  year={2011},
  publisher={JSTOR}
}

@book{westfall1993resampling,
  author    = {Westfall, Peter H. and Young, S. Stanley},
  title     = {Resampling-Based Multiple Testing: Examples and Methods for p-Value Adjustment},
  publisher = {John Wiley \& Sons},
  address   = {New York},
  year      = {1993},
  isbn      = {978-0-471-55761-6}
}

@article{vesely2023permutation,
  title={Permutation-based true discovery guarantee by sum tests},
  author={Vesely, Anna and Finos, Livio and Goeman, Jelle J},
  journal={Journal of the Royal Statistical Society Series B: Statistical Methodology},
  volume={85},
  number={3},
  pages={664--683},
  year={2023},
  publisher={Oxford University Press US}
}

@article{bolker2009generalized,
  title={Generalized linear mixed models: a practical guide for ecology and evolution},
  author={Bolker, Benjamin M and Brooks, Mollie E and Clark, Connie J and Geange, Shane W and Poulsen, John R and Stevens, M Henry H and White, Jada-Simone S},
  journal={Trends in ecology \& evolution},
  volume={24},
  number={3},
  pages={127--135},
  year={2009},
  publisher={Elsevier}
}

@article{goeman2021only,
  title={Only closed testing procedures are admissible for controlling false discovery proportions},
  author={Goeman, Jelle J and Hemerik, Jesse and Solari, Aldo},
  journal={The Annals of Statistics},
  volume={49},
  number={2},
  pages={1218--1238},
  year={2021},
  publisher={JSTOR}
}

@article{lin2010relative,
  title={On the relative efficiency of using summary statistics versus individual-level data in meta-analysis},
  author={Lin, Dan-Yu and Zeng, Daniel},
  journal={Biometrika},
  volume={97},
  number={2},
  pages={321--332},
  year={2010},
  publisher={Oxford University Press}
}

@article{senn1990analysis,
  title={Analysis of serial measurements in medical research},
  author={Senn, Stephen},
  journal={BMJ: British Medical Journal},
  volume={300},
  number={6725},
  pages={680},
  year={1990}
}

@inproceedings{andreella2025blockwise,
  title={Blockwise Resampling for Robust Fixed Effects Inference in Linear Mixed Models},
  author={Andreella, Angela and Finos, Livio},
  booktitle={Scientific Meeting of the Italian Statistical Society},
  pages={46--51},
  year={2025},
  organization={Springer}
}

@article{heagerty2001misspecified,
  title={Misspecified maximum likelihood estimates and generalised linear mixed models},
  author={Heagerty, Patrick J and Kurland, Brenda F},
  journal={Biometrika},
  volume={88},
  number={4},
  pages={973--985},
  year={2001},
  publisher={Oxford University Press}
}

@incollection{unesco2003isced,
  author    = {{UNESCO}},
  title     = {{International Standard Classification of Education, ISCED 1997}},
  booktitle = {Advances in Cross-National Comparison: A European Working Book for Demographic and Socio-Economic Variables},
  editor    = {Hoffmeyer-Zlotnik, J{\"u}rgen H. P. and Wolf, Christof},
  pages     = {195--220},
  publisher = {Springer},
  address   = {Boston, MA},
  year      = {2003}
}

@techreport{istat2021anziani,
  author      = {{Istat}},
  title       = {Le condizioni di salute della popolazione anziana in {I}talia},
  institution = {Istituto Nazionale di Statistica},
  year        = {2021},
  address     = {Roma}
}

@article{van2018measuring,
  title={Measuring disability: a systematic review of the validity and reliability of the Global Activity Limitations Indicator ({GALI})},
  author={Van Oyen, Herman and Bogaert, Petronille and Yokota, Renata TC and Berger, Nicolas},
  journal={Archives of Public Health},
  volume={76},
  number={1},
  pages={25},
  year={2018},
  publisher={Springer}
}

@article{de2025permutation,
  title={Permutation-based multiple testing when fitting many generalized linear models},
  author={De Santis, Riccardo and Goeman, Jelle J and Davenport, Samuel and Hemerik, Jesse and Finos, Livio},
  journal={Electronic Journal of Statistics},
  volume={19},
  number={2},
  pages={3317--3332},
  year={2025},
  publisher={The Institute of Mathematical Statistics and the Bernoulli Society}
}

@article{de2025inference,
  title={Inference in generalized linear models with robustness to misspecified variances},
  author={De Santis, Riccardo and Goeman, Jelle J and Hemerik, Jesse and Davenport, Samuel and Finos, Livio},
  journal={Journal of the American Statistical Association},
  volume={120},
  number={552},
  pages={2762--2771},
  year={2025},
  publisher={Taylor \& Francis}
}

@article{marcus1976closed,
  title={On closed testing procedures with special reference to ordered analysis of variance},
  author={Marcus, Ruth and Eric, Peritz and Gabriel, K Ruben},
  journal={Biometrika},
  volume={63},
  number={3},
  pages={655--660},
  year={1976},
  publisher={Oxford University Press}
}

@book{pesarin2001multivariate,
  author    = {Pesarin, Fortunato},
  title     = {Multivariate Permutation Tests: With Applications in Biostatistics},
  publisher = {John Wiley \& Sons},
  address   = {Chichester},
  year      = {2001},
  isbn      = {978-0-471-49670-0}
}

@article{borsch2013data,
  title={Data resource profile: the {S}urvey of {H}ealth, {A}geing and {R}etirement in {E}urope ({SHARE})},
  author={B{\"o}rsch-Supan, Axel and Brandt, Martina and Hunkler, Christian and Kneip, Thorsten and Korbmacher, Julie and Malter, Frederic and Schaan, Barbara and Stuck, Stephanie and Zuber, Sabrina},
  journal={International journal of epidemiology},
  volume={42},
  number={4},
  pages={992--1001},
  year={2013},
  publisher={Oxford University Press}
}

@article{breslow1993approximate,
  author  = {Breslow, Norman E. and Clayton, David G.},
  title   = {Approximate Inference in Generalized Linear Mixed Models},
  journal = {Journal of the American Statistical Association},
  year    = {1993},
  volume  = {88},
  number  = {421},
  pages   = {9--25}
}

@book{mcculloch2008generalized,
  author    = {McCulloch, Charles E. and Searle, Shayle R. and Neuhaus, John M.},
  title     = {Generalized, Linear, and Mixed Models},
  edition   = {2},
  series    = {Wiley Series in Probability and Statistics},
  publisher = {John Wiley \& Sons},
  address   = {Hoboken, NJ},
  year      = {2008},
  isbn      = {978-0-470-07371-1}
}

@article{gelman2014garden,
  author  = {Gelman, Andrew and Loken, Eric},
  title   = {The Statistical Crisis in Science},
  journal = {American Scientist},
  year    = {2014},
  volume  = {102},
  number  = {6},
  pages   = {460--465}
}

@article{steegen2016increasing,
  author  = {Steegen, Sara and Tuerlinckx, Francis and Gelman, Andrew and Vanpaemel, Wolf},
  title   = {Increasing Transparency Through a Multiverse Analysis},
  journal = {Perspectives on Psychological Science},
  year    = {2016},
  volume  = {11},
  number  = {5},
  pages   = {702--712}
}

@article{simmons2011false,
  author  = {Simmons, Joseph P. and Nelson, Leif D. and Simonsohn, Uri},
  title   = {False-Positive Psychology: Undisclosed Flexibility in Data Collection and Analysis Allows Presenting Anything as Significant},
  journal = {Psychological Science},
  year    = {2011},
  volume  = {22},
  number  = {11},
  pages   = {1359--1366}
}

@article{nosek2014scientific,
  author  = {Nosek, Brian A. and Lakens, Dani\"el},
  title   = {Registered Reports: A Method to Increase the Credibility of Published Results},
  journal = {Social Psychology},
  year    = {2014},
  volume  = {45},
  number  = {3},
  pages   = {137--141}
}

@article{girardi2024post,
  author  = {Girardi, Paolo and Vesely, Anna and Lakens, Dani\"el and Alto\`e, Gianmarco and Pastore, Massimiliano and Calcagn\`i, Antonio and Finos, Livio},
  title   = {Post-selection {I}nference in {M}ultiverse {A}nalysis ({PIMA}): An Inferential Framework Based on the Sign Flipping Score Test},
  journal = {Psychometrika},
  year    = {2024},
  volume  = {89},
  number  = {2},
  pages   = {542--568}
}

@article{hemerik2020robust,
  author  = {Hemerik, Jesse and Goeman, Jelle J. and Finos, Livio},
  title   = {Robust Testing in Generalized Linear Models by Sign Flipping Score Contributions},
  journal = {Journal of the Royal Statistical Society: Series B},
  year    = {2020},
  volume  = {82},
  number  = {3},
  pages   = {841--864}
}

@article{cnaan1997tutorial,
  author  = {Cnaan, Avital and Laird, Nan and Slasor, Peter},
  title   = {Tutorial in Biostatistics: Using the General Linear Mixed Model to Analyse Unbalanced Repeated Measures and Longitudinal Data},
  journal = {Statistics in Medicine},
  year    = {1997},
  volume  = {16},
  number  = {20},
  pages   = {2349--2380}
}

@Manual{pima,
    title = {pima: Post-selection Inference in Multiverse Analysis},
    author = {Livio Finos and Paolo Girardi and Filippo Gambarota and Anna Vesely and Giulia Calignano and Massimiliano Pastore and Gianmarco Altoè},
    year = {2022},
    url = {https://github.com/livioivil/pima},
    note = {R package}
  }

@article{holm1979simple,
  title={A simple sequentially rejective multiple test procedure},
  author={Holm, Sture},
  journal={Scandinavian Journal of Statistics},
  pages={65--70},
  year={1979},
  publisher={JSTOR}
}

@article{benjamini2020selective,
  title={Selective inference: The silent killer of replicability},
  author={Benjamini, Yoav},
  journal={Harvard Data Science Review},
  volume={2},
  number={4},
  year={2020},
  publisher={The MIT Press}
}

@article{andreella2026multivariate,
  title={Multivariate mixed models with model-free random effects},
  author={Andreella, Angela and Finos, Livio},
  journal={arXiv preprint arXiv:2604.27907},
  year={2026}
}

@article{blanchard2020post,
  title   = {Post hoc confidence bounds on false positives using reference families},
  author  = {Blanchard, Gilles and Neuvial, Pierre and Roquain, Etienne},
  journal = {The Annals of Statistics},
  year    = {2020},
  volume  = {48},
  number  = {3},
  pages   = {1281--1303}
}

@article{blain2022notip,
  title={Notip: Non-parametric true discovery proportion control for brain imaging},
  author={Blain, Alexandre and Thirion, Bertrand and Neuvial, Pierre},
  journal={NeuroImage},
  volume={260},
  pages={119492},
  year={2022},
  publisher={Elsevier}
}

@article{andreella2023permutation,
  title={Permutation-based true discovery proportions for functional magnetic resonance imaging cluster analysis},
  author={Andreella, Angela and Hemerik, Jesse and Finos, Livio and Weeda, Wouter and Goeman, Jelle},
  journal={Statistics in Medicine},
  volume={42},
  number={14},
  pages={2311--2340},
  year={2023},
  publisher={Wiley Online Library}
}

@article{firth1993bias,
  title={Bias reduction of maximum likelihood estimates},
  author={Firth, David},
  journal={Biometrika},
  pages={27--38},
  year={1993},
  publisher={JSTOR}
}

@article{barr2013random,
  author  = {Barr, Dale J. and Levy, Roger and Scheepers, Christoph and Tily, Harry J.},
  title   = {Random Effects Structure for Confirmatory Hypothesis Testing: Keep It Maximal},
  journal = {Journal of Memory and Language},
  year    = {2013},
  volume  = {68},
  number  = {3},
  pages   = {255--278}
}

@article{bates2015parsimonious,
  author  = {Bates, Douglas and Kliegl, Reinhold and Vasishth, Shravan and Baayen, Harald},
  title   = {Parsimonious Mixed Models},
  journal = {arXiv preprint arXiv:1506.04967},
  year    = {2015}
}

@article{matuschek2017balancing,
  author  = {Matuschek, Hannes and Kliegl, Reinhold and Vasishth, Shravan and Baayen, Harald and Bates, Douglas},
  title   = {Balancing Type {I} Error and Power in Linear Mixed Models},
  journal = {Journal of Memory and Language},
  year    = {2017},
  volume  = {94},
  pages   = {305--315}
}

@article{simonsohn2020specification,
  title={Specification curve analysis},
  author={Simonsohn, Uri and Simmons, Joseph P and Nelson, Leif D},
  journal={Nature Human Behaviour},
  volume={4},
  number={11},
  pages={1208--1214},
  year={2020},
  publisher={Nature Publishing Group}
}

@article{heinze2006comparative,
author = {Heinze, Georg},
title = {A comparative investigation of methods for logistic regression with separated or nearly separated data},
journal = {Statistics in Medicine},
volume = {25},
number = {24},
pages = {4216-4226},
year = {2006}
}

\end{document}